\journal{Discrete Applied Mathematics}
\newcommand{\siom}{\sigma_{\om_i}}
\newcommand{\siomm}{\sigma_{\om_{i-1}}}
\newcommand{\om}{\omega}
\newcommand{\chid}{\chi'^d_{\sum}}
\newtheorem*{rep@theorem}{\rep@title}
\newcommand{\newreptheorem}[2]{%
	\newenvironment{rep#1}[1]{%
		\def\rep@title{#2 \ref{##1}}%
		\begin{rep@theorem}}%
		{\end{rep@theorem}}}
\newtheorem{theorem}{Theorem}
\newtheorem{lemma}[theorem]{Lemma}
\newtheorem{obs}[theorem]{Observation}
\newtheorem{corollary}[theorem]{Corollary}
\newtheorem{claim}{Claim}
\newtheorem{remark}[theorem]{Remark}
\begin{document}
\begin{frontmatter}

\title{Neighbour sum distinguishing edge-weightings with local constraints}


\author[1,2,3]{Antoine Dailly}
\ead{antoine.dailly@im.unam.mx}

\author[4]{El\.zbieta Sidorowicz\corref{cor1}}
\ead{e.sidorowicz@wmie.uz.zgora.pl}

\address[1]{Instituto de Matem\'aticas, UNAM Juriquilla, 76230 Quer\'etaro, Mexico}

\address[2]{G-SCOP, Universit\'e Grenoble Alpes, CNRS, Grenoble, France}

\address[3]{LIMOS, CNRS UMR 6158, Université Clermont Auvergne, Aubi\`ere, France}

\address[4]{Institute of Mathematics, University of Zielona G\'ora  ul. prof. Z. Szafrana 4a, 65-516 Zielona G\'ora, Poland}

\cortext[cor1]{Corresponding author}

\begin{abstract}
A $k$-edge-weighting of $G$ is a mapping $\omega:E(G)\longrightarrow \{1,\ldots,k\}$. The edge-weighting  of $G$ naturally induces a vertex-colouring $\sigma_{\omega}:V(G)\longrightarrow \mathbb{N}$ given by
$\sigma_{\omega}(v)=\sum_{u\in N_G(v)}\om(vu)$ for every $v\in V(G)$. The edge-weighting $\om$ is neighbour sum distinguishing if it yields a proper vertex-colouring  $\sigma_{\om}$, \emph{i.e.},  $\sigma_{\om}(u)\neq \sigma_{\om}(v)$ for every edge $uv$ of $G$.
		
We investigate a neighbour sum distinguishing edge-weighting with  local constraints, namely, we assume that the set of  edges incident to a vertex of large  degree  is not monochromatic. A graph is nice if it has no components isomorphic to $K_2$. We prove that every nice graph with maximum degree at most~5 admits a neighbour sum distinguishing $(\Delta(G)+2)$-edge-weighting such that all the vertices of degree at least~2 are incident with at least two edges of different weights. Furthermore, we prove that every nice graph admits a neighbour sum distinguishing $7$-edge-weighting such that all the vertices of degree at least~6 are incident with at least two edges of different weights. Finally, we show that nice bipartite graphs admit a neighbour sum distinguishing $6$-edge-weighting such that all the vertices of degree at least~2 are incident with at least two edges of different weights.

\end{abstract}

\begin{keyword}

1-2-3 Conjecture\sep neighbour sum distinguishing edge weighting\sep neighbour sum distinguishing edge colouring 




\end{keyword}

\end{frontmatter}

\section{Introduction}
	Let $G$ be a graph and $k\in \mathbb N^*$. Every \emph{$k$-edge-weighting}, \emph{i.e.}, a mapping $\omega:E(G)\longrightarrow \{1,\ldots,k\}$, induces a vertex-colouring $\sigma_{\om}:V(G)\longrightarrow \mathbb{N}$, where $\sigma_{\omega}(v)=\sum_{u\in N_G(v)}\om(vu)$. In natural language, assigning weights to edges allows us to obtain a vertex-colouring by assigning to each vertex the sum of the weights of its incident edges. We say that the edge-weighting $\om$ \emph{distinguishes} vertices $v,w \in V(G)$ if $\sigma_{\om}(v)\neq \sigma_{\om}(w)$, and that $\om$ is \emph{neighbour sum distinguishing} (or simply \emph{distinguishing}) if it distinguishes every pair of adjacent vertices.
	
	Hence, a \emph{neighbour sum distinguishing $k$-edge-weighting} is a mapping $\omega:E(G)\longrightarrow \{1,\ldots,k\}$ that is distinguishing, \emph{i.e.}, the induced vertex-colouring $\sigma_{\om}$ is proper. 
	Observe that $G$ always admits such a neighbour sum distinguishing edge-weighting, unless it includes $K_2$ as a  component: assign a different power of~2 to every edge, thus ensuring that every vertex will get a different sum; however, in a $K_2$ component, the two vertices cannot be distinguished. Hence, we call $G$ \emph{nice} whenever it has no such component.
	
	In 2004 Karo\'nski et al. \cite{KaLu04} posed the conjecture, called the 1-2-3 Conjecture, that  asks whether every nice graph  admits  a $3$-edge-weighting that is neighbour sum distinguishing. The 1-2-3 Conjecture inspired  a lot of  studies on the original conjecture and variants of it. For more information on that topic, we refer the reader to the survey by Seamone \cite{Se12}. The best result towards the 1-2-3 Conjecture
	is due to Kalkowski et al. \cite{KaKa11}, who proved that every nice graph  admits a neighbour sum distinguishing $5$-edge-weighting.
	The conjecture cannot be pushed further down, since there are graphs that require three weights, as an example, see cycles or complete graphs. It was proved by Dudek and Wajc \cite{DuWa11} that  deciding whether there is a neighbour sum distinguishing $2$-edge-weighting for a given  graph  is NP-complete in general, while Thomassen, Wu and Zhang \cite{ToWu16} showed that the same problem is polynomial-time solvable in the family of bipartite graphs.
	Recently Przyby{\l}o \cite{Pr20} proved that  every $d$-regular graph $(d\ge 2)$ admits a neighbour sum distinguishing $4$-edge-weighting  and that the 1-2-3 Conjecture is true for $d$-regular graphs with $d\ge 10^8$. 
	
	In the version of the neighbour sum distinguishing edge-weighting, introduced by Karo\'nski et al. \cite{KaLu04}, the edges incident with a vertex may have the same weight. On the other hand, Flandrin et al. \cite{FlMa13} introduced the version of the  edge-weighting, called a neighbour sum  distinguishing $k$-edge-colouring, which distinguishes vertices and in which adjacent edges must have different weights.
	A {\it $k$-edge-colouring} of $G$ is a mapping $\omega:E(G)\longrightarrow \{1,\ldots,k\}$ such that $\om(e_1)\neq \om(e_2)$ for every two adjacent edges $e_1,e_2\in E(G)$. If the  $k$-edge-colouring $\omega$ is distinguishing, 
	then we call such a colouring a \emph{neighbour sum  distinguishing $k$-edge-colouring}.  The smallest value $k$ for which $G$ admits  a neighbour sum  distinguishing $k$-edge-colouring is denoted by $\chi'_{\sum}(G)$. Flandrin et al. conjectured that every nice graph $G$ except $C_5$ verifies $\chi'_{\sum}(G) \leq \Delta(G)+2$.
	Wang and Yan \cite{WaYa2014} proved that $\chi'_{\sum}(G)\le  \left\lceil (10\Delta(G)+2)/3\right\rceil$ when $\Delta(G)\ge 18$. It is known that $\chi'_{\sum}(G)\le  2\Delta(G)+{\rm col}(G)-1$ \cite{Pr15} and  $\chi'_{\sum}(G)\le  \Delta(G)+3{\rm col}(G)-4$ \cite{PrzWo15}, where ${\rm col}(G)$ denotes the  {\it colouring number of} $G$, \emph{i.e.} the smallest integer $k$ such that $G$ has a vertex ordering in which each vertex is preceded by fewer than $k$ of its neighbours. Recently, Przyby{\l}o \cite{Pr19b} proved that $\chi'_{\sum}(G)\le\Delta+O(\sqrt{\Delta})$, where $\Delta=\Delta(G)$.
	
	In a previous work~\cite{DaDuPeSi}, the authors proposed a generalization of both conjectures, by introducting the notion of \emph{neighbour sum  distinguishing relaxed edge-colouring}. The idea is to have a continuum of parameters from the neighbour sum distinguishing edge-weighting of the 1-2-3 Conjecture to the neighbour sum  distinguishing edge-colouring of the proper variant, by allowing each vertex to be incident with a limited number of edges of the same colour.
	
	More formally, a $d$-\emph{relaxed $k$-edge-colouring} is a mapping $\omega:E(G)\longrightarrow \{1,\ldots,k\}$ such that each  monochromatic set of edges induces a subgraph with maximum degree at most~$d$. If a $d$-relaxed $k$-edge-colouring $\omega$  is distinguishing, then it is called a \emph{neighbour sum  distinguishing $d$-relaxed $k$-edge-colouring}. By $\chid(G)$, we denote the smallest value $k$ for which $G$ admits a neighbour sum  distinguishing $d$-relaxed $k$-edge-colouring. Hence, $\chi'^{1}_{\sum}(G)=\chi'_{\sum}(G)$, and $\chi'^{\Delta(G)}_{\sum}(G)$ corresponds to the 1-2-3 Conjecture. The general conjecture stated in~\cite{DaDuPeSi} is that every nice graph except $C_5$ verifies $\chi'^{1}_{\sum}(G) \leq \left\lceil \frac{\Delta(G)}{d} \right\rceil+2$.
	
	One way of studying graph parameters is by bounding the maximum degree of the graph. While graphs of maximum degree~2 (forests of paths and cycles) are generally easy, the case of subcubic graphs is often not trivial. Indeed, while the 1-2-3 Conjecture holds for subcubic graphs~\cite{KaLu04}, the proper variant is still open, with the best bound as of now being~6~\cite{HuWa17} (the conjecture states that the bound should be~5). In~\cite{DaDuPeSi}, the 2-relaxed case for the general conjecture was settled for subcubic graphs, and in fact a more general result was proved: every nice subcubic graph with no component isomorphic to $C_5$ admits a neighbour sum  distinguishing $2$-relaxed $4$-edge-colouring such that every vertex of degree~2 is incident with edges coloured differently.
	
	Inspired by this result, we consider edge-weightings allowing a vertex to be incident with edges having the same weight, in a limited way. We require that a vertex of large enough degree is incident with at least two edges of different weights. Such a version is, on the one hand, stronger than the classical edge-weighting, while, on the other hand, it is weaker than the edge-colouring. Indeed, observe that if $G$ admits a neighbour sum  distinguishing $k$-edge-weighting such that every vertex of degree at least~2 (or at least~6 for graphs with maximum degree at least~6)  is incident with at least two edges of different weights, then $\chi'^{\Delta-1}_{\sum}(G)\le k$.
	
	Our paper is organised as follows. 
	In Sections~\ref{sec:degreeFour} and~\ref{sec:degreeFive}, we consider nice graphs with degree at most~4 and at most~5, respectively.
	We prove that every nice graph $G$ with degree at most~5 admits a neighbour sum distinguishing $(\Delta(G)+2)$-edge-weighting such that all the vertices of degree at least~2 are incident with at least two edges of different weights. In Section~\ref{sec:general},  we prove that every nice graph  admits a neighbour sum distinguishing $7$-edge-weighting such that all the vertices of degree at least~6 are incident with at least two edges of different weights. In Section~\ref{sec:bipartite}, we show that the result from Section~\ref{sec:general} can be improved for bipartite graphs:  we prove that every  nice bipartite graph   admits  a neighbour sum distinguishing $6$-edge-weighting such that all the vertices of degree at least~2 are incident with at least two edges of different weights. Furthermore, we show that every connected bipartite graph on at least three vertices having a vertex partition $(V_1,V_2)$ such that $|V_1|$ is even admits a neighbour sum  distinguishing $4$-edge-weighting such that every vertex of degree at least~2 is incident with at least two edges of different weights.
	
	Those results can be reframed in the relaxed framework. First, we prove in Sections~\ref{sec:degreeFour} and~\ref{sec:degreeFive} every nice graph $G$ with $\Delta(G) \leq 4$ (resp. $\Delta(G) \leq 5$) verifies $\chi'^{\Delta-1}_{\sum}(G)\le 6$ (resp. $\chi'^{\Delta-1}_{\sum}(G)\le 7$). Then, we prove in Section~\ref{sec:general} a result that, together with the previous two, shows that every nice graph $G$ verifies $\chi'^{\Delta-1}_{\sum}(G)\le 7$. Finally, we prove in Section~\ref{sec:bipartite} that every nice bipartite graph $G$ verifies $\chi'^{\Delta-1}_{\sum}(G)\le 6$. Note that, for all those graphs, the expected bound from the general conjecture is~4. While the bounds we prove are quite close, there is still room for improvement.
	
	\ \\
	
	We will need several preliminary results from the literature in order to prove our own, the first of which being the theorem in~\cite{DaDuPeSi} about the neighbour sum distinguishing $2$-relaxed $4$-edge-colouring of subcubic graphs, which can be equivalently rewritten in the following way:
	
	\begin{theorem}{\rm \cite{DaDuPeSi}}\label{thm:subcubic}
		If $G$ is a nice subcubic graph with no  component isomorphic to $C_5$, then it admits a neighbour sum  distinguishing $4$-edge-weighting such that every vertex of degree at least 2 is incident with at least two edges of different weights.
	\end{theorem}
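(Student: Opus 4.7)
The plan is to prove this via a minimum counterexample argument. Assume $G$ is a counterexample with the fewest edges among nice subcubic graphs without $C_5$ components that fail to admit the required weighting. I would derive a contradiction through structural analysis: identify a small reducible substructure in $G$, apply the induction hypothesis to the reduced graph $G'$ obtained by deleting it, and extend the weighting of $G'$ back to $G$ by choosing values in $\{1,2,3,4\}$ for the reintroduced edges so as to satisfy both the sum-distinguishing condition and the local two-different-weights constraint.

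Before running the induction, I would handle the small base cases directly: isolated vertices, $P_3$, short paths, triangles, $C_4$, and cycles $C_n$ with $n \geq 6$ all admit explicit weightings, while $C_5$ does not, which is exactly why it must be excluded. The failure of $C_5$ is because the local constraint at each of its degree-$2$ vertices forces the edge-weighting to be a proper edge-colouring of an odd cycle, and an exhaustive case check shows that no such colouring from $\{1,2,3,4\}$ avoids a collision of vertex sums at some pair of adjacent vertices.

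The heart of the case analysis rules out reducible configurations in a minimum counterexample. The cleanest one is a pendant vertex $v$ with neighbour $u$: apart from the degenerate situations in which $G - v$ fails to be nice or acquires a $C_5$ component, both of which correspond to fully-described small graphs handled by hand, the induction hypothesis applies to $G - v$, and the pendant edge $uv$ can then be weighted. Indeed, $\sigma_G(v) = \omega(uv) \neq \sigma_G(u)$ is automatic since $u$ has at least one further neighbour in $G$; at most two values of $\omega(uv)$ are excluded to preserve distinguishability of $u$ from its at most two remaining neighbours, and at most one further value is excluded to enforce the two-different-weights condition at $u$, leaving at least one valid choice out of four. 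Analogous but more delicate reductions would eliminate configurations such as two adjacent degree-$2$ vertices, a degree-$2$ vertex contained in a short cycle, and short chains joining two cubic vertices.

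The main obstacle is the interplay with the $C_5$ exclusion: whenever deleting the chosen substructure would produce a new $C_5$ component in $G'$, the induction cannot be invoked, and the local neighbourhood of such a near-$C_5$ in $G$ must be enumerated and weighted explicitly. A secondary difficulty is that after all reductions are exhausted, the residual structure could be $2$-regular, i.e.\ a disjoint union of cycles of lengths different from $5$; each such cycle $C_n$ must then receive an explicit valid weighting, and the bookkeeping required to avoid sum collisions where these cycles meet the inductively weighted part is the most error-prone portion of the argument.
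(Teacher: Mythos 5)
First, a point of order: the paper does not prove this statement at all --- it is imported verbatim from \cite{DaDuPeSi} as a preliminary tool, so there is no in-paper proof to compare yours against. Judged on its own, your submission is an outline rather than a proof, and it contains one concrete structural gap. Every reducible configuration you name --- a pendant vertex, two adjacent degree-$2$ vertices, a degree-$2$ vertex on a short cycle, a chain between two cubic vertices --- requires the presence of a vertex of degree at most $2$. A minimum counterexample that avoids all of these is therefore $3$-regular, not $2$-regular as you assert; cubic graphs such as $K_4$, $K_{3,3}$ or the Petersen graph admit none of your reductions, and your outline says nothing about how to handle them. This is precisely where the substance of the cited theorem lies (note that even for the plain 1-2-3 Conjecture the cubic case already needs a genuine argument), so the claim that ``after all reductions are exhausted, the residual structure could be a disjoint union of cycles'' does not follow and the induction cannot close.

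Two smaller remarks. Your pendant-vertex step is essentially sound (and in fact the count is better than you state: when the two-different-weights constraint at $u$ is active, $u$ has only one other neighbour, so at most two values are ever forbidden), and your explanation of why $C_5$ fails can be made clean without exhaustion: properness plus sum-distinguishing on $C_5$ forces all five edge weights to be pairwise distinct, which is impossible with four weights. But the remaining reductions are only promised (``more delicate reductions would eliminate\ldots'', ``must be enumerated and weighted explicitly''), not carried out, so even setting aside the cubic case the argument is a plan for a proof rather than a proof.
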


	Another result that will be widely used for graphs with maximum degree at most~5 is the following theorem by Alon \cite{Al99}.
	
	\begin{theorem}[Combinatorial Nullstellensatz \cite{Al99}] \label{alon}
		Let $\mathbb{F}$ be an arbitrary field, and let $P = P(x_1,\ldots, x_n)$ be a polynomial in $\mathbb{F}[x_1,\ldots, x_n]$. Suppose the degree $deg(P)$ of $P$ equals $\sum_{i=1}^n k_i$, where each $k_i$ is a nonnegative integer, and suppose the coefficient of $x_1^{k_1}\cdots x_n^{k_n}$ in $P$ is nonzero. Then if $S_1,\ldots , S_n$ are subsets of $\mathbb{F}$ with $|S_i| > k_i$, there are $s_1\in S_1,\ldots ,s_n\in S_n$ such that $P(s_1,\ldots, s_n)\neq 0$.
	\end{theorem}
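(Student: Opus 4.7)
The plan is to prove Theorem~\ref{alon} in two stages. First, I reduce to the case where $P$ has degree at most $k_i$ in $x_i$ for every $i$, via a substitution argument based on the polynomials $g_i(x_i) = \prod_{s \in S_i}(x_i - s)$. Second, I handle the reduced case by induction on the number of variables~$n$, using only the elementary fact that a nonzero univariate polynomial of degree $m$ has at most $m$ roots.

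For the reduction, set $t_i = |S_i| - 1 \geq k_i$ and write $g_i(x_i) = x_i^{t_i+1} - h_i(x_i)$, where $\deg h_i \leq t_i$. Because $g_i$ vanishes on $S_i$, the identity $s_i^{t_i+1} = h_i(s_i)$ holds at every $s_i \in S_i$. I iteratively replace each occurrence of $x_i^{t_i+1}$ in $P$ by $h_i(x_i)$ to obtain a polynomial $\bar{P}$ with $\deg_{x_i} \bar{P} \leq t_i$ for every~$i$ and $\bar{P}(s_1,\ldots,s_n) = P(s_1,\ldots,s_n)$ on $S_1 \times \cdots \times S_n$. Each substitution strictly decreases the $x_i$-degree of the monomial being rewritten, and since it introduces no other variables, it strictly decreases the total degree as well; hence no new monomial of total degree equal to $\deg P$ is ever produced, and the coefficient of $x_1^{k_1}\cdots x_n^{k_n}$ (a monomial of maximal total degree whose individual exponents already satisfy $k_i \leq t_i$) is untouched and remains nonzero in $\bar{P}$.

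It remains to show that any nonzero polynomial $\bar{P}$ satisfying $\deg_{x_i}\bar{P} \leq k_i < |S_i|$ attains a nonzero value somewhere on $S_1 \times \cdots \times S_n$. I do this by induction on $n$. For $n=1$, a nonzero polynomial of degree at most $k_1$ has at most $k_1$ roots, while $|S_1| > k_1$, so some $s_1 \in S_1$ witnesses $\bar{P}(s_1) \neq 0$. For $n > 1$, write $\bar{P} = \sum_{j=0}^{k_n} x_n^j Q_j(x_1,\ldots,x_{n-1})$; since $\bar{P} \neq 0$, at least one $Q_j$ is nonzero, and by the inductive hypothesis there exist $s_1 \in S_1, \ldots, s_{n-1} \in S_{n-1}$ with $Q_j(s_1,\ldots,s_{n-1}) \neq 0$. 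Then $\bar{P}(s_1,\ldots,s_{n-1},x_n)$ is a nonzero univariate polynomial of degree at most $k_n$, and the base case yields $s_n \in S_n$ with $\bar{P}(s_1,\ldots,s_n) \neq 0$. Combining with the reduction gives $P(s_1,\ldots,s_n) = \bar{P}(s_1,\ldots,s_n) \neq 0$.

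The main subtlety, and the step that deserves the most care, is the invariance of the coefficient of $x_1^{k_1}\cdots x_n^{k_n}$ under the reduction. The argument hinges on two points: that $\sum_i k_i = \deg P$ makes this monomial of maximal total degree, and that each rewriting $x_i^{t_i+1} \mapsto h_i(x_i)$ strictly decreases total degree while introducing no new variables. Together, these imply that no monomial of maximal total degree with exponents bounded by the $t_i$ can either be destroyed or created by the reduction, so its coefficient passes unchanged from $P$ to $\bar{P}$.
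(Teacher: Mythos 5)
This statement is quoted in the paper as a known result (Alon's Combinatorial Nullstellensatz) and the paper gives no proof of it, so there is nothing internal to compare against; your proposal is a genuine, self-contained proof, and it is correct. Your route is essentially the reduction-plus-grid-lemma argument (close to Alon's Lemma~2.1 and to Micha\l{}ek's simplified proof): you reduce $P$ modulo the monic polynomials $g_i(x_i)=\prod_{s\in S_i}(x_i-s)$ to get $\bar P$ agreeing with $P$ on the grid and with $\deg_{x_i}\bar P\le t_i<|S_i|$, observe that the coefficient of $x_1^{k_1}\cdots x_n^{k_n}$ survives because every rewriting step strictly lowers total degree while that monomial has maximal total degree $\sum_i k_i=\deg P$ and is itself never rewritten, and then finish by induction on $n$ using that a nonzero univariate polynomial of degree less than $|S_i|$ cannot vanish on all of $S_i$. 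This avoids Alon's original detour through the representation $P=\sum_i g_ih_i$ with degree bounds, at the cost of having to argue carefully (as you do) that the target coefficient is invariant under the reduction. Two small points: the reduced-case lemma should be stated with the bounds $\deg_{x_i}\bar P\le t_i=|S_i|-1$ rather than $\le k_i$, since that is what the reduction actually delivers (your induction only uses $\deg_{x_i}\bar P<|S_i|$, so nothing breaks); and the termination of the iterative rewriting deserves a word, e.g.\ by performing, for each $i$ in turn, division with remainder by the monic polynomial $g_i$ in the variable $x_i$ over $\mathbb{F}[x_1,\ldots,\widehat{x_i},\ldots,x_n]$. Neither affects the validity of the argument.
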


	
	\section{Graphs with maximum degree at most 4}
	\label{sec:degreeFour}

	\begin{theorem}\label{thm:four}
		Every   nice  graph  $G$  with $\Delta(G)\le 4$ admits a neighbour sum distinguishing $6$-edge-weighting  such that all the vertices of degree at least $2$ are incident with at least two edges of different weights.
	\end{theorem}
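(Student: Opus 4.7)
The plan is to dispose of the subcubic case using Theorem~\ref{thm:subcubic} and then to treat the remaining case $\Delta(G)=4$ via the Combinatorial Nullstellensatz (Theorem~\ref{alon}).

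Since components can be handled independently, I may assume $G$ is connected. If $\Delta(G)\leq 3$ and $G \not\cong C_5$, Theorem~\ref{thm:subcubic} directly supplies a neighbour sum distinguishing $4$-edge-weighting (hence a $6$-edge-weighting) satisfying the local requirement. For $G \cong C_5$, placing the weights $1,2,3,4,5$ consecutively along the cycle yields vertex sums $6,3,5,7,9$, pairwise distinct and realising two different weights at every vertex. Thus only the case $\Delta(G)=4$ requires genuine work.

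For this main case, I would associate an indeterminate $x_e$ with each $e\in E(G)$, taking values in $\{1,\dots,6\}$, and consider the polynomial
\[
P \;=\; \prod_{uv \in E(G)} \bigl(\sigma_\omega(u)-\sigma_\omega(v)\bigr) \cdot \prod_{\substack{v \in V(G)\\ d(v)\ge 2}} Q_v,
\]
where $\sigma_\omega(w)=\sum_{e\ni w}x_e$ and $Q_v$ is a polynomial in the variables $\{x_e:e\ni v\}$ that vanishes exactly when all of them coincide: for $d(v)=2$ one may take $Q_v=x_{e_1^v}-x_{e_2^v}$, and for $d(v)\in\{3,4\}$ a product of a few well-chosen differences among the edges at $v$. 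Because each variable ranges over a set of size $6$, Theorem~\ref{alon} produces a valid weighting as soon as $P$ contains a monomial $\prod_e x_e^{k_e}$ with $k_e\leq 5$ for every $e$, of total degree equal to $\deg(P)$ and with nonzero coefficient.

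The substantive obstacle is exhibiting such a monomial. The standard route is to fix an orientation $D$ of $G$ with small in-degree, use $D$ to designate a responsible variable in each difference $\sigma_\omega(u)-\sigma_\omega(v)$, and read off a candidate monomial from this choice; its coefficient then expands, up to sign, as a combinatorial count attached to $D$. The bound $\Delta(G)=4$ leaves enough room for every exponent $k_e$ to stay strictly below $6$, which is precisely why six weights suffice. The hardest step will be controlling the extra cross-terms contributed by the factors $Q_v$ at higher-degree vertices; I expect a short case analysis based on the local degree sequence around each edge, the delicate case being two adjacent degree-$4$ vertices with highly symmetric neighbourhoods, to suffice for certifying non-cancellation.
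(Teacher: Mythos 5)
Your reduction to the case $\Delta(G)=4$ via Theorem~\ref{thm:subcubic} (with the explicit $C_5$ weighting) matches the paper, but from there your route diverges sharply and contains a genuine gap. You propose one global application of Theorem~\ref{alon} to a polynomial $P$ indexed by all edges of $G$, and you defer the entire substance of the argument to the sentence ``I expect a short case analysis \dots to suffice for certifying non-cancellation.'' That step is not routine: exhibiting an admissible monomial of $P$ with nonzero coefficient for such a global ``sum-distinguishing'' polynomial is essentially the open difficulty behind the 1-2-3 Conjecture itself, and the orientation/Alon--Tarsi heuristic you invoke is not known to control cancellation for these polynomials. Nothing in your write-up identifies a concrete monomial or computes a coefficient, so the proof has no engine. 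There is also a secondary flaw in the factors $Q_v$: for $d(v)\ge 3$ no product of pairwise differences vanishes \emph{exactly} when all weights at $v$ coincide. Requiring $Q_v\neq 0$ for, say, $Q_v=(x_{e_1}-x_{e_2})(x_{e_3}-x_{e_4})$ forbids the perfectly acceptable weighting $x_{e_1}=x_{e_2}=1$, $x_{e_3}=x_{e_4}=2$, so you are over-constraining the problem; this is harmless only if you may choose the distinguished pair \emph{after} seeing the rest of the weighting, which a single global polynomial does not allow.

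The paper avoids all of this by inducting on the number of edges. It picks a degree-$4$ vertex $u$, deletes either two edges $uv,uw$ with $vw\in E(G)$ (Case~1) or the whole vertex $u$ (Case~2), applies the induction hypothesis to the remainder, and then extends the weighting by a Combinatorial Nullstellensatz argument in only two, respectively four, variables. There the forbidden values are counted by hand to get sets $S_i$ of size at least $4$ (resp.\ $3$), and the needed coefficient is read off from a small homogeneous polynomial such as $(x_1+x_2)^2(x_1-x_2)x_1x_2$. In particular the ``not monochromatic'' condition is enforced by a single factor like $(x_3-x_4)$ chosen locally, which is legitimate precisely because only a few new edges are being weighted at a time. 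To repair your proposal you would need to replace the global polynomial by this kind of inductive, local application of Theorem~\ref{alon}, or else supply an actual non-cancellation certificate, which is not available by the method you sketch.
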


	This result can be restated the following way:
	
	\begin{reptheorem}{thm:four}
		Every nice  graph  $G$  with $\Delta(G)\le 4$ verifies $\chi'^{\Delta-1}_{\sum}(G)\le 6$.
	\end{reptheorem}

	\begin{proof}
		We proceed by induction on the number of edges. It is easy to see that the theorem is true for nice graphs with two  and three edges. Assume that the theorem is true for nice graphs with at most $m-1$ edges.   Let $G$ be a  nice graph with $m$ edges. We may assume that $G$ is connected, since otherwise, by induction, every component has a $6$-edge-weighting that satisfies the theorem. Furthermore, we  may assume that  $\Delta(G)=4$, because, by Theorem \ref{thm:subcubic}, the theorem is true for all  nice subcubic graphs except $C_5$, and $C_5$ admits  a neighbour sum distinguishing $5$-edge-weighting such that all the vertices  are incident with edges of different weights. Let $u$ be a vertex of degree 4 in $G$.

		\medskip
		\noindent {\bf Case 1. There is an edge in the subgraph induced by $N(u)$}
		
		Let $N(u)=\{v,w,u_1,u_2\}$ and $vw\in E(G)$. Let $G'$ be obtained from $G$ by removing the two edges $uv$ and $uw$. $G'$ has at most two components. Each component of  $G'$ with at least two edges admits an edge-weighting  that satisfies the theorem. If $d_{G'}(v)\ge 2$ or $d_{G'}(w)\ge 2$, then every component has at least two edges, otherwise one component is isomorphic to $K_2$. Let $\om$ be an edge-weighting of components of $G'$ with at least two edges that satisfies the theorem, and   additionally we extend the edge-weighting $\om$ on the component isomorphic to $K_2$ (if such exists), which we weight with an arbitrary weight.

		To obtain our final  edge-weighting, we just need to weight the two edges $uv$ and  $uw$ while making sure that all the vertices of $\{u,v,w\}$ are distinguished with their neighbours and the vertices $v$ and $w$ are incident with two edges of different weights. Note that the
		vertex $u$ already has two incident edges of distinct weights, because $d_{G'}(u)=2$. If $d_{G'}(v)\ge 2$ and $d_{G'}(w)\ge 2$, then  $v$ and $w$ also have two incident edges of distinct weights, otherwise we have to choose a weight on $uv$ and  $uw$ that is different from $\om(vw)$.

		First, we consider how many weights we have to forbid for the edges $uv$ and $uw$ such that we obtain an edge-weighting that distinguishes all adjacent vertices except the pairs $(v,w),(u,u_1),(u,u_2)$ and such that all vertices of degree at least 2 are incident with two edges of different weights. The vertex $v$ must be distinguished from its neighbours in $G'-w$. If $v$ has two neighbours in $G'-w$, then  there are potentially two forbidden weights for $vu$. Thus, four possible weights remain for $uv$. If $v$ has exactly one neighbour in $G'-w$, then there is at most one forbidden weight for $uv$. If $w$ is the only neighbour of $v$ in $G'$, then the weight of $uv$ must be different from the weight of $vw$ and hence there is at most one forbidden weight for $uv$. Thus, summarising, there are at most two forbidden weights for $uv$.  Similarly, we  can observe that for $uw$ there are at most two forbidden weights. Let $S_1$ be the set of weights that are not forbidden for $uv$ and  $S_2$ be the set of weights that are not forbidden for $uw$, so $|S_1|\ge 4$ and $|S_2|\ge 4$. Observe that if we choose for $uv$ a weight from $S_1$ and for $uw$ a weight from $S_2$, then we obtain  an edge-weighting of $G$ in which all pairs of adjacent vertices, except $(v,w),(u,u_1),(u,u_2)$, are distinguished, and all vertices of degree at least 2 are incident with two edges of different weights. Let $x_1\in S_1$ and $x_2\in S_2$ be the weights attributed to $uv$ and $uw$, respectively.  To obtain an edge-weighting that satisfies the theorem for the weights $x_1$ and $x_2$, we must have the following: 
		\vspace{-5pt}
		\begin{itemize}\itemsep-3pt
			\item $x_1+x_2+\sigma_{\om}(u)\neq \sigma_{\om}(u_i)$, because $u$ must be distinguished from $u_i$ for $i\in\{1,2\}$;
			\item $x_2+\sigma_{\om}(u)\neq \sigma_{\om}(v)$, because we have to distinguish $u$ and $v$;
			\item $x_1+\sigma_{\om}(u)\neq \sigma_{\om}(w)$, because we have to distinguish $u$ and $w$;
			\item $x_1+\sigma_{\om}(v)\neq x_2+\sigma_{\om}(w)$, because we have to distinguish $v$ and $w$.
		\end{itemize}
		\noindent To prove that there are weights $x_1$ and $x_2$ that satisfy all the above conditions, we construct the polynomial: 
		
		\begin{align*}
			P(x_1,x_2) ={} & (x_1+x_2+\sigma_{\om}(u)- \sigma_{\om}(u_1)) \\
			& (x_1+x_2+\sigma_{\om}(u)- \sigma_{\om}(u_2)) \\
			& (x_2+\sigma_{\om}(u)- \sigma_{\om}(v)) \\
			& (x_1+\sigma_{\om}(u)- \sigma_{\om}(w)) \\
			& (x_1-x_2+\sigma_{\om}(v)-\sigma_{\om}(w)).
		\end{align*}

		\noindent If there exist $x_1$ and $x_2$ such that $P(x_1,x_2)\neq 0$ and  $x_i\in S_i$ ($i\in\{1,2\}$), then the $x_i$'s satisfy all the conditions. By weighting $uv,uw$ with $x_1,x_2$,  we can extend the edge-weighting $\om$ to an edge-weighting that satisfies the theorem.  We apply Theorem  \ref{alon} to prove that $x_1$ and $x_2$ exist. First, we claim that the coefficient of the monomial  $x_1^3x_2^2$ is non-zero. Observe that this coefficient in $P$ is the same as in the following polynomial:
		
		$$P_1(x_1,x_2)=(x_1+x_2)^2(x_1-x_2)x_1x_2.$$

		\noindent The coefficient of the monomial  $x_1^3x_2^2$ is 1. Since $|S_1|>3$ and $|S_2|>2$,  Theorem \ref{alon} implies that there are $x_1\in S_1$ and $x_2\in S_2$ such that $P(x_1,x_2)\neq 0$ and equivalently there is the desired  edge-weighting of $G$. 
		
		\medskip
		\noindent {\bf Case 2.  $N(u)$ is an independent set}
		
		Let $N(u)=\{u_1,u_2, u_3,u_4\}$ and $G'=G-u$. Each component of  $G'$ with at least two edges admits an edge-weighting  that satisfies the theorem. Let $\om$ be an edge-weighting of components of $G'$ with at least two edges that satisfies the theorem, and additionally we extend the edge-weighting $\om$ to the components isomorphic to $K_2$ (if such exists), which we weight with an arbitrary weight.  
		
		To obtain a final  edge-weighting, we just need to weight the  edges $uu_i$ for $i\in\{1,2,3,4\}$. We choose a weight for $uu_i$ in such a way that ensures that $u_i$ is distinguished with its neighbours in $G'$ and if $u_i$ has exactly one neighbour in $G'$, then the weight of $uu_i$ is different from the weight of the edge incident with $u_i$ in $G'$. Furthermore, after weighting the four edges $uu_1,uu_2,uu_3,uu_4$, the vertex $u$ must be distinguished from its neighbours, and these edges cannot be weighted with the same weight.
		
		First, we consider how many weights we have to forbid for edges $uu_i$ such that we obtain an edge-weighting in which the pairs of adjacent vertices of $G'$ are still distinguished and all vertices of $G'$ are incident with two edges of distinct weights. Since  $u_i$ must be distinguished from its neighbours in $G'$, we have at most three forbidden weights for $uu_i$. If $u_i$ has exactly one neighbour in $G'$, then in order to distinguish $u_i$ from its neighbour there is at most one forbidden weight and the weight of $uu_i$ must be different from the weight of the edge incident with $u_i$ in $G'$, so together we have at most two forbidden weights. Thus, $uu_i$ has at most three forbidden weights.
		Let $S_i$ be the set of weights that are not forbidden for $uu_i$, thus $|S_i|\ge 3$ for $i\in\{1,2,3,4\}$. After weighting the edge $uu_i$ with weight $x_i\in S_i$ for $i\in\{1,2,3,4\}$ we obtain an edge-weighting that distinguishes all vertices of $G'$ and every vertex of $G'$ is incident with at least two edges of different weights. Let $x_i\in S_i$ be weights attributed to $uu_i$ for $i\in\{1,2,3,4\}$.  To obtain an edge-weighting that satisfies the theorem for  $x_i$, it must additionally  hold: 
		\vspace{-5pt}
		\begin{itemize}\itemsep-3pt
			\item $x_1+x_2+x_3+x_4-x_i\neq \sigma_{\om}(u_i)$, because we have to distinguish $u$ and $u_i$ for $i\in\{1,2,3,4\}$;
			\item $x_i\neq x_j$ for some $i,j\in \{1,2,3,4\}$, because $u$ must be  incident with at least two edges of different weights.
		\end{itemize}
		\noindent We consider the polynomial 
		
		\begin{align*}
			P(x_1,x_2,x_3,x_4)= & (x_2+x_3+x_4-\sigma_{\om}(u_1)) \\
			& (x_1+x_3+x_4- \sigma_{\om}(u_2)) \\
			& (x_1+x_2+x_4- \sigma_{\om}(u_3)) \\
			& (x_1+x_2+x_3- \sigma_{\om}(u_4)) \\
			& (x_3-x_4).
		\end{align*}
		
		\noindent If there exist $x_1,x_2,x_3,x_4$ such that $P(x_1,x_2,x_3,x_4)\neq 0$ and $x_i\in S_i$ ($i\in\{1,2,3,4\}$), then the $x_i$ satisfy all the conditions. By weighting $uu_i$ with $x_i$, we can extend the edge-weighting $\om$ to an edge-weighting that satisfies the theorem. To prove that there are such $x_i$ we again apply Theorem  \ref{alon}. We consider the coefficient of the monomial  $x_1^2x_2x_3^2$. Observe that this coefficient in $P$ is the same as in the following polynomial:		
		$$P_1(x_1,x_2,x_3,x_4)=(x_2+x_3+x_4)(x_1+x_3+x_4)(x_1+x_2+x_4)(x_1+x_2+x_3)(x_3-x_4).$$
		
		\noindent The coefficient of the monomial  $x_1^2x_2x_3^2$ is non-zero. Since $|S_1|>2,|S_2|>1$ and $|S_3|\ge 2$,  Theorem \ref{alon} implies that there are $x_i\in S_i$ ($i\in\{1,2,3,4\}$) such that $P(x_1,x_2,x_3,x_4)\neq 0$ and so an edge-weighting of $G$ that satisfies the theorem exists. 
		
	\end{proof}
	
	
	\section{Graphs with maximum degree at most 5}
	\label{sec:degreeFive}

	\begin{theorem}\label{thm:five}
		Every   nice  graph  $G$  with $\Delta(G)\le 5$ admits a neighbour sum distinguishing $7$-edge-weighting  such that all the vertices of degree at least $2$ are incident with at least two edges of different weights.
	\end{theorem}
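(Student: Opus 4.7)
The plan is to mimic the proof of Theorem~\ref{thm:four} and argue by induction on the number of edges. Small base cases are handled directly. In the inductive step, let $G$ be a nice graph with $m$ edges and $\Delta(G)\le 5$. If $\Delta(G)\le 4$, Theorem~\ref{thm:four} already yields a neighbour sum distinguishing $6$-edge-weighting of $G$ with the desired local property, and this is a fortiori a $7$-edge-weighting. So assume $\Delta(G)=5$ and fix a vertex $u$ of degree~$5$. We split on whether the subgraph induced by $N(u)$ contains an edge.

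In Case~1, suppose $vw\in E(G)$ for two neighbours $v,w$ of $u$, and write $N(u)=\{v,w,u_1,u_2,u_3\}$. Remove $uv$ and $uw$ to obtain $G'$, apply the induction hypothesis, and extend the resulting weighting $\omega$ (weighting the $K_2$ component $\{v,w\}$ arbitrarily if it arises, which happens only when $d_G(v)=d_G(w)=2$). Then pick $x_1=\omega(uv)$ and $x_2=\omega(uw)$. Counting the forbidden values---distinguishing $v$ and $w$ from their other $G'$-neighbours, plus the local constraint $x_i\ne \omega(vw)$ when $d_G(v)=2$ or $d_G(w)=2$---gives $|S_1|,|S_2|\ge 4$ in general, $|S_1|\ge 5$ when $d_G(v)\le 4$, and $|S_2|\ge 5$ when $d_G(w)\le 4$. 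The remaining conditions (distinguishing $u$ from $u_1,u_2,u_3,v,w$ and $v$ from $w$) are encoded in a degree-$6$ polynomial $P(x_1,x_2)$, to which we apply the Combinatorial Nullstellensatz (Theorem~\ref{alon}).

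In Case~2, $N(u)=\{u_1,\ldots,u_5\}$ is independent. Remove $u$ and apply induction to $G-u$ (weighting any $K_2$ component arbitrarily). Choose $x_i=\omega(uu_i)$ for $i=1,\ldots,5$. For each $i$, the forbidden weights (distinguishing $u_i$ from its $G'$-neighbours and enforcing the local condition when $d_G(u_i)=2$) number at most~$4$, so $|S_i|\ge 3$. The distinguishing constraints at $u$ together with a single local factor $x_j\ne x_k$ yield a degree-$6$ polynomial $P(x_1,\ldots,x_5)$; the Combinatorial Nullstellensatz gives the desired weights provided we exhibit a monomial $x_1^{a_1}\cdots x_5^{a_5}$ with each $a_i\le 2$ and nonzero coefficient.

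The main obstacle is the coefficient bookkeeping in Case~1. The direct analog of the polynomial from Theorem~\ref{thm:four} has leading part $(x_1+x_2)^3 x_1 x_2(x_1-x_2)$, whose $x_1^3 x_2^3$-coefficient is $0$---precisely the monomial one would want when $|S_1|=|S_2|=4$. The monomials $x_1^4x_2^2$ and $x_1^2 x_2^4$ do have coefficients $\pm 2$, which combined with the improved size bounds handles every subcase except $d_G(v)=d_G(w)=5$. For that delicate subcase I would either remove an additional edge (for instance, also $vw$, introducing a third variable $x_3=\omega(vw)$ and a larger space of candidate monomials), or choose $u$ so as to minimise the number of degree-$5$ neighbours before applying the Nullstellensatz. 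Case~2 should be more routine, since having five variables and a symmetry-breaking factor $(x_j-x_k)$ leaves ample room to locate a usable low-exponent monomial such as $x_1^2 x_2 x_3^2 x_4 x_5$, whose coefficient reduces to counting certain derangement-like selections in $\prod_{i=1}^{5}(s-x_i)$ with $s=x_1+\cdots+x_5$.
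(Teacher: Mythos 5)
Your skeleton is the same as the paper's: induction on edges, reduction to $\Delta(G)=5$ via Theorem~\ref{thm:four}, the two cases on whether $N(u)$ induces an edge, and the Combinatorial Nullstellensatz after counting forbidden weights. Your bookkeeping in Case~1 is correct as far as it goes (the $x_1^3x_2^3$-coefficient of $(x_1+x_2)^3x_1x_2(x_1-x_2)$ does vanish, while $x_1^4x_2^2$ and $x_1^2x_2^4$ have coefficients $\pm2$, which together with $|S_1|\ge 5$ or $|S_2|\ge 5$ covers everything except $d_G(v)=d_G(w)=5$). But that last subcase is precisely where the work lies, and you leave it as a disjunction of two untested strategies. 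The first one --- also delete $vw$ and work with three variables --- is exactly what the paper does (in fact it uses it for the whole range $d_G(v),d_G(w)\ge 4$): with $x_3=\omega(vw)$ all three variables range over $\{1,\ldots,7\}$, the distinguishing constraints for $v$ and $w$ become explicit factors of $P$, and the relevant leading part is $(x_1+x_2)^3(x_1+x_3)^3(x_2+x_3)^3(x_1-x_2)(x_1-x_3)(x_2-x_3)$, in which the monomial $x_1^5x_2^4x_3^3$ (of degree $12=\deg P$) has coefficient $2$. That computation is the missing heart of your Case~1 and must actually be carried out. Your second fallback --- choosing $u$ to minimise the number of degree-$5$ neighbours --- does not work: in a $5$-regular graph in which every closed neighbourhood contains a triangle, every choice of $u$ lands you in the bad subcase.

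Two smaller points. In Case~2 your candidate monomial $x_1^2x_2x_3^2x_4x_5$ has degree $7$, whereas $P$ has degree $6$ (five linear distinguishing factors plus one factor $x_j-x_k$); the Nullstellensatz requires the monomial degree to equal $\deg P$, so this particular candidate has coefficient $0$ and cannot be used. The paper takes $x_1^2x_2^2x_3^2$, whose coefficient in $(x_2+x_3+x_4+x_5)(x_1+x_3+x_4+x_5)(x_1+x_2+x_4+x_5)(x_1+x_2+x_3+x_5)(x_1+x_2+x_3+x_4)(x_3-x_4)$ is nonzero, with $|S_i|\ge 3$ sufficing. Finally, when you delete three edges in the $d_G(v),d_G(w)\ge 4$ subcase you should note that every component of $G'$ then has at least two edges (since $d_{G'}(v),d_{G'}(w)\ge 2$ and $d_{G'}(u)=3$), so the induction hypothesis applies cleanly and no $K_2$ component arises.
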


	This result can be restated the following way:
	
	\begin{reptheorem}{thm:five}
		Every nice  graph  $G$  with $\Delta(G)\le 5$ verifies $\chi'^{\Delta-1}_{\sum}(G)\le 7$.
	\end{reptheorem}
	
	\begin{proof}
		We proceed by induction on the number of edges.  It is easy to see that the theorem is true for nice graphs with two,  three and four edges.  Assume that the theorem is true for nice graphs with at most $m-1$ edges. Let $G$ be a  nice graph with $m$ edges. We may assume that $G$ is connected, since otherwise, by induction, every component admits a $7$-edge-weighting that satisfies the theorem. Furthermore, by Theorem \ref{thm:four}, we  may assume that  $\Delta(G)=5$ since, otherwise, the result holds.  Let $u$ be a vertex of degree 5 in $G$. 
		
		\medskip
		\noindent {\bf Case 1. There is an edge in the subgraph induced by $N(u)$}
		
		Let $N(u)=\{v,w,u_1,u_2,u_3\}$ and $vw\in E(G)$. 
		
		First, we consider the case where $d_G(v)\le 3$ or $d_G(w)\le 3$, say without loss of generality $d_G(v)\le 3$.
		Let $G'$ be the graph obtained by removing from $G$ the two edges $uv$ and $uw$. Each component of  $G'$ with at least two edges admits an edge-weighting  that satisfies the theorem.  Let $\om$ be an  edge-weighting of components of $G'$ with at least two edges that satisfies the theorem, and   additionally we extend the edge-weighting $\om$ on the components isomorphic to $K_2$, which we weight with an arbitrary weight.

		To obtain our desired edge-weighting, we need to weight the two edges $uv$ and  $uw$,  making sure that the vertices $u$, $v$ and $w$ are distinguished from their neighbours and the vertices $v$ and $w$ are incident with two edges of distinct weights. Note that the
		vertex $u$ already verifies this property, since $d_{G'}(u)=3$.

		First, we consider how many weights we have to forbid for the edges $uv$ and $uw$ for us to obtain an edge-weighting that distinguishes all adjacent vertices except the pairs $(v,w),(u,u_1), (u,u_2), (u,u_3)$ and in which all vertices of degree at least two are incident with two edges of different weights. The vertex $v$ must be distinguished from its neighbour in $G'-w$. If $v$ has one neighbour in $G'-w$, then  there is potentially one forbidden weight for $vu$, such that $v$ is just incident with two edges weighted differently. If $v$ has no neighbour in $G'-w$, then the weight of $uv$ must be different from the weight of $vw$; so again there is  one forbidden weight for $uv$. Thus, there are six possible weights for $uv$.  To distinguish $w$ from its neighbours in $G'-w$ there are at most three forbidden weights.  If $w$ is the only neighbour of $v$ in $G'$, then the weight of $uw$ must be different from the weight of $vw$ and hence there is at most one forbidden weight for $uv$.   In conclusion, there are at least four possible weights for $uw$.  Let $S_1$ be the set of weights that are not forbidden for $uv$ and  $S_2$ be the set of weights that are not forbidden for $uw$, so $|S_1|\ge 6$ and $|S_2|\ge 4$. To prove that we can choose weights from $S_1$ and $S_2$ such that we result in an edge-weighting that satisfies the conditions of the theorem, we use Theorem  \ref{alon}.  Let $x_1\in S_1$ and $x_2\in S_2$ be weights attributed to $uv$ and $uw$, respectively.  To obtain the final edge-weighting, the weights $x_1$ and $x_2$ must additionally verify: 
		\vspace{-5pt}
		\begin{itemize}\itemsep-3pt
			\item $x_1+x_2+\sigma_{\om}(u)\neq \sigma_{\om}(u_i)$, because $u$ must be distinguished from $u_i$, for $i=1,2,3$;
			\item $x_2+\sigma_{\om}(u)\neq \sigma_{\om}(v)$, because we have to distinguish $u$ and $v$;
			\item $x_1+\sigma_{\om}(u)\neq \sigma_{\om}(w)$, because we have to distinguish $u$ and $w$;
			\item $x_1+\sigma_{\om}(v)\neq x_2+\sigma_{\om}(w)$, because we have to distinguish $v$ and $w$.
		\end{itemize}
		\noindent We construct the polynomial 
		
		\begin{align*}
			P(x_1,x_2)= & \prod_{i=1,2,3}(x_1+x_2+\sigma_{\om}(u)- \sigma_{\om}(u_i)) \\
			& (x_2+\sigma_{\om}(u)- \sigma_{\om}(v)) \\
			& (x_1+\sigma_{\om}(u)- \sigma_{\om}(w)) \\
			& (x_1-x_2+\sigma_{\om}(v)-\sigma_{\om}(w)).
		\end{align*}

		\noindent We consider the coefficient of the monomial  $x_1^5x_2$. Observe that this coefficient in $P$ is the same as in the following polynomial:
		$$ P_1(x_1,x_2)=(x_1+x_2)^3(x_1-x_2)x_1x_2.$$

		\noindent The coefficient of the monomial  $x_1^5x_2$ is 1. Since $|S_1|>5$ and $|S_2|>1$,  Theorem \ref{alon} implies that there are $x_1\in S_1$ and $x_2\in S_2$ such that $P(x_1,x_2)\neq 0$ and equivalently we can construct the desired  edge-weighting of $G$. 
		
		Consider now the case when $d_G(v)\ge 4$ and $d_G(w)\ge 4$.
		Let $G'$ be obtained from $G$ by removing the three edges $uv,uw$ and $vw$. Each component of  $G'$ has at least two edges, so it admits an edge-weighting  that satisfies the theorem.  Let $\om$ be an edge-weighting of components of $G'$  that satisfies the theorem. Observe that in $G'$ the vertices $u,v,w$ are  just incident with at least two edges of different weights, since $d_{G'}(u)=3,d_{G'}(v)\ge 2$ and $d_{G'}(w)\ge 2$. 
		
		Let $x_1,x_2$ and $x_3$ be weights attributed to $uv,uw$ and $vw$, respectively.  To obtain an edge-weighting of $G$ that satisfies the theorem, $x_1$ and $x_2$ must verify: 
		\vspace{-5pt}
		\begin{itemize}\itemsep-3pt
			\item $x_1+x_2+\sigma_{\om}(u)\neq \sigma_{\om}(u_i)$, because $u$ must be distinguished from $u_i$ for $i\in\{1,2,3\}$;
			\item $x_1+x_3+\sigma_{\om}(v)\neq \sigma_{\om}(v_i)$, where $i\in \{1,2\}$ if $v$ has two neighbours $v_1,v_2$ in $G'$ and $i\in \{1,2,3\}$ if $v$ has three neighbours $v_1,v_2,v_3$ in $G'$, because $v$ must be distinguished from its neighbours in $G'$;
			\item $x_2+x_3+\sigma_{\om}(w)\neq \sigma_{\om}(w_i)$, where $i\in \{1,2\}$ if $w$ has two neighbours $w_1,w_2$ in $G'$ and $i\in \{1,2,3\}$ if $w$ has three neighbours $w_1,w_2,w_3$ in $G'$, because $w$ must be distinguished from its neighbours in $G'$;
			\item $x_1+\sigma_{\om}(v)\neq x_2+\sigma_{\om}(w)$, because $v$ must be distinguished from $w$;
			\item $x_1+\sigma_{\om}(u)\neq x_3+\sigma_{\om}(w)$, because $u$ must be distinguished from $w$;
			\item $x_2+\sigma_{\om}(u)\neq x_3+\sigma_{\om}(v)$, because $u$ must be distinguished from $v$.
		\end{itemize}
		
		\noindent We construct the polynomial 
		
		\begin{align*}
			P(x_1,x_2,x_3)= & \prod_{i=1,2,3}(x_1+x_2+\sigma_{\om}(u)- \sigma_{\om}(u_i)) \\
			& \prod_{i=1,2,3}(x_1+x_3+\sigma_{\om}(v)- \sigma_{\om}(v_i)) \\
			& \prod_{i=1,2,3}(x_2+x_3+\sigma_{\om}(w)- \sigma_{\om}(w_i)) \\
			& (x_1+\sigma_{\om}(v)- x_2-\sigma_{\om}(w)) \\
			& (x_1+\sigma_{\om}(u)- x_3-\sigma_{\om}(w)) \\
			& (x_2+\sigma_{\om}(u)- x_3-\sigma_{\om}(v)).
		\end{align*}
		
		\noindent If there are $x_i\in \{1,\ldots,7\}$ ($i\in\{1,2,3\}$) such that $P(x_1,x_2,x_3)\neq 0$, then by weighting $uv,uw,vw$ with $x_1,x_2,x_3$ we can extend the edge-weighting $\om$ of $G'$ to an edge-weighting of $G$ that satisfies the theorem whenever $d_G(v)=d_G(w)=5$.  If $d_G(v)= 4$ or $d_G(w)= 4$, then the polynomial $R$,  which we should construct for proving that the weights $x_1,x_2,x_3$ exist, is a factor of $P(x_1,x_2,x_3)$. However, if $P(x_1,x_2,x_3)\neq 0$, then also for the factor $R$ we have $R(x_1,x_2,x_3)\neq 0$. So it is enough to consider  the polynomial $P$.
		
		To prove that there are $x_i\in \{1,\ldots,7\}$ ($i\in\{1,2,3\}$) such that $P(x_1,x_2,x_3)\neq 0$ we apply Theorem  \ref{alon}. Consider the coefficient of the monomial  $x_1^5x_2^4x_3^3$. Observe that this coefficient in $P$ is the same as in the following polynomial:
		$$P_1(x_1,x_2,x_3)=(x_1+x_2)^3(x_1+x_3)^3(x_2+x_3)^3(x_1-x_2)(x_1-x_3)(x_2-x_3).$$
		
		\noindent The coefficient of the monomial  $x_1^5x_2^4x_3^3$ is 2. Theorem \ref{alon} implies that there are $x_i\in \{1,\ldots,7\}$  such that $P(x_1,x_2,x_3)\neq 0$ and equivalently there is the desired  edge-weighting of $G$.

		\medskip
		\noindent {\bf Case 2.  $N(u)$ is an independent set}
		
		This part of the proof is very similar to {\bf Case 2} of the proof of Theorem \ref{thm:four}.
		Let $N(u)=\{u_1,u_2, u_3,u_4,u_5\}$. Let $G'=G-u$. Each component of  $G'$ with at least two edges has an edge-weighting  that satisfies the theorem. Let $\om$ be an  edge-weighting of components of $G'$ with at least two edges that satisfies the theorem. We extend the edge-weighting $\om$ to the components isomorphic to $K_2$, which we weight with an arbitrary weight.  
		
		First, we consider how many weights we have to forbid for edges $uu_i$ such that we result in an edge-weighting in which the pairs of adjacent vertices of $G'$ are still distinguished and all vertices of $G'$ are incident with two edges of distinct weights. Since the vertex $u_i$ must be distinguished from its neighbours in $G'$ and $d_{G'}(u_i) \leq 4$, we have at most four forbidden weights for $uu_i$. If $u_i$ has exactly one neighbour in $G'$, then in order to distinguish $u_i$ from its neighbour there is at most one forbidden weight and the weight of $uu_i$ must be different from the weight of the edge incident with $u_i$ in $G'$, so together we have at most two forbidden weights. Let $S_i$ be the set of weights that are not forbidden for $uu_i$, thus $|S_i|\ge 3$ for $i\in\{1,2,3,4,5\}$. After weighting the edge $uu_i$ with weight $x_i\in S_i$ for $i\in\{1,2,3,4,5\}$, we obtain an edge-weighting that distinguishes all vertices of $G'$ and every vertex of $G'$ is incident with at least two edges of different weights. Let $x_i\in S_i$ be weights attributed to $uu_i$ for $i\in\{1,2,3,4,5\}$.  To obtain an edge-weighting that satisfies the theorem, the weights $x_i$ must additionally  verify: 
		\vspace{-5pt}
		\begin{itemize}\itemsep-3pt
			\item $x_1+x_2+x_3+x_4+x_5-x_i\neq \sigma_{\om}(u_i)$, because we have to distinguish $u$ and $u_i$ for $i\in\{1,2,3,4,5\}$;
			\item $x_i\neq x_j$ for some $i,j\in \{1,2,3,4,5\}$, because $u$ must be  adjacent to at least two edges of different weights.
		\end{itemize}
		
		\noindent We construct the polynomial 
		
		\begin{align*}
			P(x_,x_2,x_3,x_4,x_5)= & (x_2+x_3+x_4+x_5-\sigma_{\om}(u_1)) \\
			& (x_1+x_3+x_4+x_5- \sigma_{\om}(u_2)) \\
			& (x_1+x_2+x_4+x_5- \sigma_{\om}(u_3)) \\
			& (x_1+x_2+x_3+x_5- \sigma_{\om}(u_4)) \\
			& (x_1+x_2+x_3+x_4-\sigma_{\om}(u_5)) \\
			& (x_3-x_4).
		\end{align*}
		
		\noindent If there are $x_i\in S_i$ ($i\in\{1,2,3,4,5\}$) such that $P(x_1,x_2,x_3,x_4,x_5)\neq 0$, then, by weighting $uu_i$ with $x_i$, we extend the edge-weighting $\om$ to an edge-weighting that satisfies the theorem. We again apply Theorem  \ref{alon} to prove that there are such $x_i$'s. We consider the coefficient of the monomial  $x_1^2x_2^2x_3^2$. Observe that this coefficient in $P$ is the same as in the following polynomial:
		
		\begin{align*}
                  P_1(x_1,x_2,x_3,x_4,x_5)=&(x_2+x_3+x_4+x_5)(x_1+x_3+x_4+x_5)(x_1+x_2+x_4+x_5)\\
                  &(x_1+x_2+x_3+x_5)(x_1+x_2+x_3+x_4)(x_3-x_4).
                  \end{align*}
		
		\noindent The coefficient of the monomial  $x_1^2x_2^2x_3^2$ is non-zero. Since $|S_1|>2,|S_2|>2$ and $|S_3|> 2$,  Theorem \ref{alon} implies that there are $x_i\in S_i$ ($i\in\{1,2,3,4,5\}$) such that $P(x_1,x_2,x_3,x_4,x_5)\neq 0$ and so we can construct the desired  edge-weighting of $G$. 
	\end{proof}

	
	\section{Graphs with maximum degree at least 6}
	\label{sec:general}

	In this section, we prove that every nice graph admits a neighbour sum distinguishing $7$-edge-weighting such that every vertex of degree at least 6 is incident with at least two edges of different weights. Our approach is based on the algorithm given in   \cite{KaLu04} for proving that every nice graph admits a neighbour sum distinguishing $5$-edge-weighting. It is worth mentioning that modifications of that algorithm allowed getting new results for the neighbour sum distinguishing edge-weighting and its variants. For example, Bensmail \cite{Be19} proved that every 5-regular graph admits a neighbour sum distinguishing $4$-edge-weighting and Gao et al. \cite{GaWaWu16} proved that the 1-2-3 Conjecture is true if we allow the vertices with the same incident sum to induce a forest.
	
	We prove the following theorem:
	\begin{theorem}
		\label{thm:seven_colours}
		Every   nice  graph  $G$   admits a neighbour sum distinguishing $7$-edge-weighting of $G$ such that all the vertices of degree at least $6$ are incident with at least two edges of different weights.
	\end{theorem}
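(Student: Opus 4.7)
The plan is to adapt the Kalkowski-Karoński-Łuczak–style iterative algorithm referenced just above the theorem, originally used to produce neighbour sum distinguishing $5$-edge-weightings, by enlarging the palette to $\{1,\ldots,7\}$. The two additional weights should supply precisely the flexibility required to enforce the new bichromaticity constraint at vertices of degree at least $6$; at vertices of degree at most $5$ the constraint is vacuous and the classical argument essentially applies.

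First I would fix a vertex ordering $v_1,\ldots,v_n$ of $V(G)$ so that each $v_i$ with $i<n$ has at least one later neighbour (for instance a reverse BFS/DFS layering), initialize every edge to a neutral weight such as $4$, and then process vertices one by one in order. The invariant to maintain after stage $i$ is: (a)~weights of edges inside $V_i:=\{v_1,\ldots,v_i\}$ are final; (b)~each edge from $V_i$ to $V(G)\setminus V_i$ still has its weight in a two-element consecutive set and therefore remains adjustable; (c)~the sums $\sigma_{\om}$ distinguish adjacent vertices inside $V_i$; and (d)~every vertex of $V_i$ of degree at least $6$ is already incident with two edges of distinct weights. At stage $i$ the edges from $v_i$ to its predecessors become final; I would pick their weights so that $\sigma_{\om}(v_i)$ avoids the at most $d_G(v_i)-1$ forbidden sums coming from previously processed neighbours, and so that, when $d_G(v_i)\ge 6$, at least two distinct weights occur among the edges incident to $v_i$. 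That both conditions can be met simultaneously reduces to a counting or Nullstellensatz argument: the reachable values of $\sigma_{\om}(v_i)$ form a long interval thanks to the two-valued slack on flexible edges, and the ability to assign the finalized edges from the full $\{1,\ldots,7\}$ easily accommodates the bichromaticity requirement. In case of difficulty, Theorem~\ref{alon} applied to the appropriate product polynomial---built exactly as in Sections~\ref{sec:degreeFour} and~\ref{sec:degreeFive}---should close the step.

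The main obstacle I foresee is the very last vertex $v_n$, which has no later neighbour and therefore no flexible edges when it is processed: its sum is entirely determined by choices fixed at earlier stages. I would address this by strengthening the invariant so that, just before stage $n$, at least one edge at $v_n$ still carries its two-element placeholder---effectively reserving the last free slot for $v_n$---or, failing that, by choosing the ordering so that $v_n$ has degree at most $5$, which trivializes the bichromaticity condition at the final step. In the residual case where neither option is available (e.g.\ a $6$-regular $G$ with no exploitable asymmetry), a short local-repair argument based on shifting weights along an alternating path emanating from $v_n$, combined with a further application of Theorem~\ref{alon}, should close the gap. Carrying through the bookkeeping of this induction, together with verifying the small base cases and the boundary between the degree-$5$ and degree-$\ge 6$ regimes, constitutes the bulk of the proof.
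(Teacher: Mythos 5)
Your overall strategy (a Kalkowski-style sweep with a neutral initial weight $4$ and residual slack on forward edges) is indeed the approach the paper takes, but the proposal leaves unresolved exactly the points where the real work lies, and some of its quantitative claims are wrong. The bookkeeping of forbidden sums is off: a previously processed neighbour $v_j$ contributes not one forbidden value but two, since its final sum is still undetermined and lies in a two-element set $\{w(v_j),w(v_j)+3\}$; symmetrically, the sum of $v_i$ itself remains subject to later adjustment, so you must guarantee that the whole set $\{w(v_i),w(v_i)+3\}$ is disjoint from $\{w(v_j),w(v_j)+3\}$. The paper resolves this with a residue mechanism absent from your sketch: every vertex that still has a successor is forced to satisfy $w\equiv 0,1,2 \pmod 6$, so distinct $w$-values automatically give disjoint slack sets, and the admissible modification of a back-edge is restricted to ``add $3$ or do nothing'' (properties (2) and (4) of the ALGORITHM). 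That restriction also means you do \emph{not} have the full palette $\{1,\ldots,7\}$ freely available on finalized edges --- only one binary choice per back-edge plus a bounded decrement on the single forward edge --- and making the edge set at a degree-$\ge 6$ vertex non-monochromatic under these constraints while simultaneously hitting an admissible sum is precisely what property (5) and the long case analysis of Lemma \ref{executable} are for; it is not something the enlarged palette ``easily accommodates.''

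The terminal-vertex issue you flag is not a corner case but the heart of the matter, and none of your three remedies is worked out or clearly viable: you cannot in general reserve a flexible edge at $v_n$ (its sum must eventually be finalized against neighbours whose own sums still carry slack), you cannot always choose $v_n$ of degree at most $5$ (consider $6$-regular graphs, where moreover the difficulty is distinguishing, not just bichromaticity), and ``shifting along an alternating path'' is not an argument here. Moreover, your ordering (every $v_i$ with $i<n$ has a later neighbour) does not guarantee that every $v_i$ with $i\ge 2$ has an earlier neighbour, so early vertices may have nothing finalized at their step and their sums are then governed by several independent slacks, which breaks your invariant (b); and for graphs such as two stars joined at their centres one cannot have both properties at once for all vertices. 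The paper instead proves a bespoke ordering lemma (Lemma \ref{lem:vertex_order}): every vertex after the first has a predecessor, and any vertex without a successor has at most one neighbour retaining forward flexibility when it is processed --- this last condition is what makes the successor-less case tractable, since only that one neighbour forces avoidance of both $w$ and $w+3$. Without an analogue of that lemma and of the mod-$6$ mechanism, the induction you describe does not close.
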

	
	This, together with Theorem~10 in~\cite{DaDuPeSi} (for subcubic graphs) and Theorems~\ref{thm:four} and~\ref{thm:five} (for maximum degrees~4 and~5), allows us to have the following general result:
	
	\begin{corollary}
		Every nice  graph $G$ verifies $\chi'^{\Delta-1}_{\sum}(G)\le 7$.
	\end{corollary}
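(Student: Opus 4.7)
The plan is to adapt the ``Kalkowski-style'' vertex-by-vertex algorithm suggested by the authors (see \cite{KaLu04}) from five weights to seven. Fix an arbitrary ordering $v_1,\dots,v_n$ of $V(G)$ and initialise every edge to the middle weight $4 \in \{1,\dots,7\}$. Process the vertices one after the other. When processing $v_i$, finalise the weight of each incident edge $v_iv_j$ with $j>i$, choosing them so that $\sigma_\omega(v_i)$ ends up in an admissible two-element set $L_i$ disjoint from the corresponding $L_k$ of every already-processed neighbour $v_k$ of $v_i$. Each unfinalised edge can now be adjusted by any integer in $\{-3,\dots,+3\}$ from its initial value, so the set of attainable partial sums at $v_i$ is strictly wider than in the classical 5-weight argument; this is exactly the extra slack provided by using $7$ weights and it is what will allow us to graft the local constraint onto the same algorithm.

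The new twist compared with the proof of the 5-weight theorem is the requirement that every vertex of degree at least $6$ be incident with two edges of distinct weights. We weave this into the same processing step: at each $v_i$ with $d(v_i)\ge 6$, among the many shift vectors on the outgoing edges that keep $\sigma_\omega(v_i)\in L_i$, we select one whose resulting incident weight pattern at $v_i$ is not monochromatic. This should essentially always be possible, since the ``bad'' monochromatic shift vectors are at most $7$ in number while the family of admissible vectors grows exponentially in $d(v_i)\ge 6$, leaving a large margin in which to satisfy both the distinguishing and the local constraints. For $v_i$ of degree at most $5$, no such refinement is needed and the algorithm behaves as in the unmodified Kalkowski scheme.

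The main obstacle will be the tail of the ordering, where few unfinalised edges remain. For $v_n$ in particular, $\sigma_\omega(v_n)$ is already fully determined by the processing of its neighbours, so those neighbours must have chosen their admissible sets $L_k$ so as to avoid the value that $v_n$ will eventually take; this is the standard Kalkowski endgame and is handled by placing a vertex of minimum degree last, or by reserving one specific incident edge of $v_n$ for a dedicated final adjustment. The local ``two distinct weights'' condition at $v_n$, when $d(v_n)\ge 6$, is enforced by a local swap argument in the spirit of the Combinatorial Nullstellensatz proofs of Sections~\ref{sec:degreeFour} and~\ref{sec:degreeFive}: if all edges incident to $v_n$ happen to share a single weight, change one of them to a different weight and rebalance the sum at its other endpoint using the two extra weights $6$ and $7$, which are typically underused by the main algorithm. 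Controlling the cascade of such swaps, and proving that the process terminates without re-introducing monochromatic configurations elsewhere, is the step that will require the most care.
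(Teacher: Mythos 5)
Your plan, even if completed, would not prove the corollary: it only addresses the case $\Delta(G)\ge 6$. The corollary asserts $\chi'^{\Delta-1}_{\sum}(G)\le 7$ for \emph{every} nice graph, and for a graph with $\Delta(G)\le 5$ the $(\Delta-1)$-relaxed condition forces every vertex of degree exactly $\Delta(G)$ to be incident with two edges of different weights. You explicitly waive any such requirement for vertices of degree at most $5$ (``no such refinement is needed''), so your weighting gives no control over $\chi'^{\Delta-1}_{\sum}$ when $\Delta(G)\in\{2,3,4,5\}$. The paper obtains the corollary by combining the degree-$\ge 6$ statement (Theorem~\ref{thm:seven_colours}) with separate results for subcubic graphs (Theorem~\ref{thm:subcubic}), $\Delta=4$ (Theorem~\ref{thm:four}) and $\Delta=5$ (Theorem~\ref{thm:five}), the latter two proved by induction on the number of edges together with the Combinatorial Nullstellensatz, and maintaining the much stronger property that \emph{every} vertex of degree at least $2$ is non-monochromatic. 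Some version of these small-degree cases is indispensable.

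Even for $\Delta(G)\ge 6$ the sketch leaves the genuinely hard points open. First, non-monochromaticity arranged when $v_i$ is processed can be destroyed later, because the edge from $v_i$ to its first successor remains adjustable; the paper's invariant~(5) is phrased precisely so that it survives the only modification that edge can undergo (an increase by exactly $3$). Your counting argument (``at most $7$ bad vectors versus exponentially many admissible ones'') ignores this interaction and also ignores that the admissible shift vectors are already heavily constrained by the distinguishing conditions at the predecessors. Second, the endgame is not confined to $v_n$: every vertex with no successor has a fully determined sum at the moment it is reached, and there may be many such vertices. The paper copes with this via a dedicated vertex ordering (Lemma~\ref{lem:vertex_order}, condition~(iii)) guaranteeing that such a vertex has at most one neighbour retaining any flexibility, plus the bookkeeping in property~(3ii); ``placing a vertex of minimum degree last'' does not address this. (Note also that you finalise the \emph{forward} edges at each step, which is the reverse of the Kalkowski scheme and is exactly what creates vertices with zero remaining freedom.) Third, the closing ``cascade of swaps'' using weights $6$ and $7$ is acknowledged by you to be uncontrolled; as stated it could undo the distinguishing property at the other endpoints, and no termination or correctness argument is offered.
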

	
	\noindent {\it Rough ideas of the proof of Theorem \ref{thm:seven_colours}}
	
	
	We give an algorithm which constructs a vertex-colouring $w$ and a $7$-edge-weighting $\om$.  The vertex-colouring $w$ will be almost the vertex-colouring $\sigma_{\om}$, namely $\sigma_{\om}(u)=w(u)$ or $\sigma_{\om}(u)=w(u)+3$ for $u\in V(G)$. The $7$-edge-weighting $\om$ will satisfy the conditions of Theorem \ref{thm:seven_colours}. The algorithm processes the vertices one after another, following a special ordering. First, we define that ordering and prove that every nice graph, except stars, admits such an ordering of vertices. Then, we give the algorithm and prove that every step of the algorithm is always executable. Finally, we prove that the $7$-edge-weighting $\om$ given by the algorithm is neighbour sum distinguishing and that all  vertices of degree at least $6$ are incident with at least two edges of different weights.
\vspace{0.5cm}

	Before we define the ordering of vertices (in Lemma \ref{lem:vertex_order}) we need the following notations.
	
	Let $(v_1,v_2,\ldots ,v_n)$ be an ordering of vertices of $G$. We say that $v_j$ {\it follows} $v_i$ in the ordering if $i<j$. A \emph{predecessor} (resp. \emph{successor}) of $v_i$ is every neighbour of $v_i$ in $\{v_1,\ldots ,v_{i-1}\}$ (resp. in $\{v_{i+1},\ldots ,v_n\}$) for $i\in \{1,\ldots,n\}$. Let us define a partial ordering induced by a given vertex ordering $(v_1,v_2,\ldots ,v_n)$ in the following way
	$$v_j\prec v_i \Leftrightarrow \;\mbox{there is a path}\; v_j v_{k_1} v_{k_2} \ldots v_{k_d} v_i\; \mbox{in} \;G\; \mbox{such that}\; j<k_1<k_2<\ldots<k_d<i.$$

	\begin{remark}
		Observe that two different vertex orderings of the graph $G$ may induce the same partial ordering. Indeed, let $(v_1,v_2,\ldots,v_i,v_{i+1},\ldots  ,v_n)$ be an ordering of vertices of $G$ such that $v_iv_{i+1}\notin E(G)$. Thus, $v_i\not\prec v_{i+1}$. The ordering $(v_1,v_2,\ldots,v_{i+1},v_{i},\ldots  ,v_n)$ induces the same partial ordering as $(v_1,v_2,\ldots,v_i,v_{i+1},\ldots  ,v_n)$.
	\end{remark}
	
	\begin{remark}
		If  $y\prec x$, then $x$ has a predecessor and $y$ has a successor.
	\end{remark}
	The {\it inversion} of the ordering $(v_1,v_2,\ldots ,v_n)$ is the ordering $(v_n,v_{n-1},\ldots ,v_1)$.
	
	\begin{lemma}\label{lem:vertex_order}
		Let $G$ be  a connected graph  on $n$ vertices and $G\neq K_{1,n-1}$. There is a vertex ordering $(v_1,v_2,\ldots ,v_n)$ of $G$ such that
		\begin{enumerate}[(i)]
			\item $d(v_1)\ge 2$ and $d(v_2)\ge 2$;
			\item $v_i$ has a predecessor for $i\in \{2,\ldots, n\}$;
			\item if $v_i$ has no successor, then, in $N_G(v_i)$, there is at most one vertex having a successor in $\{v_{i+1},\ldots, v_n\}$ for $i\in \{1,\ldots,n\}$.
		\end{enumerate}
	\end{lemma}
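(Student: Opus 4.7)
My approach proceeds in two stages: first producing the adjacent pair $v_1, v_2$ required by condition~(i), then extending it to a full ordering by combining an st-numbering with a block-cut tree decomposition.

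For the first stage, set $A=\{u\in V(G):d_G(u)\ge 2\}$ and suppose towards a contradiction that $A$ is an independent set in~$G$. Then every edge of $G$ has a pendant endpoint; since a pendant has a unique neighbour, two distinct vertices of $A$ cannot be linked through a pendant, and the connectedness of~$G$ forces $|A|\le 1$. This makes~$G$ either an edge or a star $K_{1,n-1}$, both excluded by hypothesis. Hence~$A$ spans an edge, which we take as $v_1v_2$.

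For the second stage I split on $2$-connectivity. If $G$ is $2$-connected, every vertex has degree at least~$2$, and I would invoke the classical st-numbering theorem applied to the edge $v_1t$ for any neighbour $t$ of $v_1$. This produces an ordering $v_1,\ldots,v_n=t$ in which every internal vertex has both a predecessor and a successor, so the unique sink is $v_n$ and~(iii) holds vacuously; the second vertex of the st-numbering is automatically a neighbour of $v_1$ of degree at least~$2$, which re-establishes~(i). If $G$ has a cut vertex, I would use its block-cut tree, rooted at a block $B_0$ containing the edge $v_1v_2$ produced above, and traverse it in DFS order; each block is handled by an st-numbering running from its ``entry'' cut vertex (the one shared with the previously processed block) to an ``exit'' cut vertex chosen to lead into the next subtree of the block-cut tree, with each shared cut vertex listed only once when the block orderings are concatenated. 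Within each block the st-numbering yields no interior sink, so the only candidate sinks in the full ordering are the last vertices placed inside leaf blocks of the block-cut tree; for such a sink all neighbours lie in the just-finished block, and the only one among them that can still have a successor is the single cut vertex linking that block to the rest of~$G$, verifying~(iii).

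The main obstacle is handling blocks that contain more than two cut vertices of~$G$: the st-numbering within such a block must be chained so that its exit visits each outgoing cut vertex in the DFS order of the block-cut tree, which can be carried out by iterating the st-numbering along an open-ear decomposition of the block but requires careful bookkeeping to make sure only one outgoing cut vertex remains ``active'' at each transition. Bridges of~$G$ are degenerate one-edge blocks and can be inserted as pendant vertices directly, with~(iii) then being satisfied automatically because the inserted pendant has a single neighbour in~$G$.
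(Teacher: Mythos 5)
Your strategy is genuinely different from the paper's: the paper does not decompose the graph at all, but takes an ordering satisfying (i) and (ii) that minimises the number of vertices without a successor, and shows that any vertex violating (iii) permits a local reordering (reversing a suborder between a suitable predecessor and the offending vertex) that strictly decreases that number --- a short extremal/exchange argument needing no structural machinery. Your first stage (finding an edge between two vertices of degree at least~2) is correct, and the $2$-connected case via an $st$-numbering is fine. However, the general case is where the lemma actually has content, and there your proposal has a genuine gap: you explicitly defer the case of a block with several outgoing cut vertices to ``careful bookkeeping'' along an ear decomposition, which is precisely the step that is never carried out. Moreover, the fix you gesture at is aimed at the wrong target. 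You do not need the block's internal ordering to ``visit each outgoing cut vertex'' as an exit; a single $st$-numbering of the block from its entry cut vertex to \emph{one} chosen child cut vertex already gives every other vertex of the block (including the remaining cut vertices) a successor inside the block, so none of them is a sink, and each remaining subtree can later be appended by $st$-numbering its first block from its own cut vertex. With that choice the only sinks are the last vertices of leaf blocks, whose neighbours all lie in a block containing a single cut vertex of $G$, and (iii) follows. As written, though, this argument is absent rather than present.

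Two further points would need repair even once the main case is filled in. First, your claim that bridges ``can be inserted as pendant vertices'' is false for a bridge both of whose endpoints are cut vertices: its far endpoint is not pendant and is not a sink, and it must be threaded into the recursion like any other block, not dismissed. Second, you silently use an $st$-numbering between an entry and an exit cut vertex that need not be adjacent; this is a true extension of the Lempel--Even--Cederbaum theorem (add the missing edge, which preserves $2$-connectivity, and observe that deleting it only affects the endpoints), but it should be stated. None of these issues is fatal --- the block-tree route can be completed --- but the paper's minimal-counterexample argument avoids all of this structure and is considerably shorter.
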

	
	\begin{remark}
		The condition (ii) can be equivalently replaced by the following one: $v_1\prec v_i$ for $i\in \{2,\ldots, n\}$.
	\end{remark}
	
	\begin{proof} [of Lemma \ref{lem:vertex_order}]
		It is easy to see that if $G$ is a connected graph  and $G$ is not a star, then  there is an ordering that satisfies  conditions (i) and (ii). On the contrary, suppose that there is no ordering that satisfies (i), (ii), and (iii).  For an ordering ${\bf v}=(v_1,v_2,\ldots ,v_n)$ by $B({\bf v})$ we denote the set  of vertices   which have no successor.
		
		Let ${\bf v}=(v_1,v_2,\ldots ,v_n)$ be an ordering that satisfies (i) and (ii) minimises $|B({\bf v})|$.

		Let $\prec$ be the partial ordering induced by ${\bf v}$ and $v$ be the first vertex in ${\bf v}$ for which (iii) fails, so $v\in B({\bf v})$.  Let ${\bf v'}$ be an ordering of $V(G)$ which induces the same partial ordering $\prec$ as ${\bf v}$, but in which the index of $v$ is minimum and let $v=v_i$ in ${\bf v'}$. Observe that every vertex has the same predecessors and successors in both orderings; so $|B({\bf v'})|=|B({\bf v})|$ and the vertex $v$ still makes (iii) fail in the ordering ${\bf v'}$. Furthermore, the choice of ${\bf v'}$ implies that for any $x\in \{v_1,\ldots, v_{i-1}\}$ we have $x\prec v_i$. 
		Let $j$ be the largest integer smaller than $i$ such that $v_j$ is a predecessor of $v_i$ and $v_j$ has a successor in $\{v_{i+1},\ldots, v_n\}$. 
		
		{\bf Case 1. $j>3$}
		
		Let ${\bf w}=(v_j,v_{k_1},v_{k_2},\ldots, v_{k_\ell})$ be a subordering of ${\bf v'}$ containing $v_j$ and all vertices $x$ such that $v_j\prec x \prec v_i$. Let ${\bf w'}$ be the inverse of ${\bf w}$.  We reorder the vertices of ${\bf v'}$ in the following way: ${\bf v''}=(v_1,\ldots,v_{j-1}, v_i,{\bf w'},v_{i+1},\ldots,v_{n})$. Let $\prec '$ be the partial ordering induced by ${\bf v''}$.  Since $v_j$ was the last predecessor of $v_i$ having  a successor in $\{v_{i+1},\ldots, v_n\}$, $v_i$  still has predecessor in ${\bf v''}$ and now $v_i$ has a successor. Furthermore, for any $x\in {\bf w'}\setminus\{v_j\}$ we have $v_i\prec' x\prec ' v_j$ and hence every vertex of ${\bf w'}\setminus\{v_j\}$ has a predecessor and a successor. Also $v_j$ has a predecessor and a successor in ${\bf v''}$. Thus ${\bf v''}$ satisfies  conditions (i) and (ii) and  $|B({\bf v''})|<|B({\bf v})|$, a contradiction.
		
		{\bf Case 2. $j\le 2$}
		
		Since $v_i$ has at least two predecessors having  a successor in $\{v_{i+1},\ldots, v_n\}$, $j=2$ and $v_1, v_2$ have   successors in $\{v_{i+1},\ldots, v_n\}$. Furthermore, $v_1, v_2$ are the only predecessor of $v_i$ having   successors in $\{v_{i+1},\ldots, v_n\}$. If $i=3$, then we reorder the vertices of ${\bf v'}$ in the following way: ${\bf v''}=(v_1,v_3,v_2,v_4,\ldots,v_n)$. In ${\bf v''}$ the vertex $v_3$ has a successor; so  $|B({\bf v''})|<|B({\bf v})|$ and ${\bf v''}$ satisfies  conditions (i) and (ii), a contradiction. Suppose that $i>3$. The condition (ii) implies that $v_3$ is adjacent to $v_2$ or $v_1$. If $v_3v_2\in E(G)$, then we reorder  ${\bf v'}$ in the following way: ${\bf v''}=(v_2,v_3,v_4,\ldots, v_i,v_1,v_{i+1}\ldots,v_n)$. If $v_2v_2\notin E(G)$, then we reorder  ${\bf v'}$ in the following way: ${\bf v''}=(v_1,v_3,v_4,\ldots, v_i,v_2,v_{i+1},\ldots,v_n)$. In both cases, ${\bf v''}$ satisfies  conditions (i) and (ii) and  $|B({\bf v''})|<|B({\bf v})|$, a contradiction. 
	\end{proof}
	
	\vspace{0.5cm}
	\noindent {\bf ALGORITHM}
	
	Let $G$ be an $n$-vertex connected graph  and $G\neq K_{1,n-1}$. Let ${\bf v}=(v_1,v_2,\ldots ,v_n)$ be a vertex ordering that satisfies  conditions (i)--(iii) of Lemma \ref{lem:vertex_order}. Let 
	
	\noindent $V'=\{v_i\in \{v_1,v_2,\ldots ,v_n\}: v_i\; \mbox{has a successor}\}$, 
	
	\noindent $V''=\{v_i\in \{v_1,v_2,\ldots ,v_n\}: v_i \;\mbox{has no successor}\}$.
	
	We start by assigning the provisional weight $4$ to every edge, then we process the $v_i's$ one after another, following the ordering ${\bf v}$. Whenever we treat  a new vertex $v_i$, we modify the weights of the edges incident with $v_i$ under some restrictions and, at the end of the step, we define $w(v_i)$ as the sum of the weights of the edges incident with $v_i$ at the end of  step $i$. The weights of edges must be in $\{1,\ldots, 7\}$.

In the $i$-th  step of ALGORITHM we treat vertex $v_i$. However, we merge the first and second steps of ALGORITHM, the vertices $v_1$ and $v_2$ are treated together.  Then, we  consider the remaining vertices according to the ordering ${\bf v}$. We assume $\om_1(e):=4$ for any $e\in E(G)$. Let $\om_2$ be the edge-weighting after the second step of ALGORITHM, $\om_i$ be the edge-weighting after treating the vertex $v_i$ (i.e. after $i$-th step of ALGORITHM),  and finally $\om:=\om_n$.  
	
	\medskip
	\noindent {\bf Step 1,2}
	
	We have $\sigma_{\om_1}(v_1)=4d_G(v_1)$ and $\sigma_{\om_1}(v_2)=4d_G(v_2)$. Observe that $4d \in \{0,2,4\} \pmod 6$ for every integer $d$. Let $e_1$ be  the edge between $v_1$ and its first successor distinct from $v_2$, let $e_2$ be the edge between $v_2$ and its first successor.  In Table \ref{table_0} we give the new weights of edges  $v_1v_2,e_1,e_2$.

	\begin{table}[!h]
		\begin{center}
			\begin{tabular}{|c|c|c|c|c|c|c|c|c|c|}
				\hline
				\begin{tabular}{cc}
					$(4d(v_1),4d(v_2))$\cr $\pmod 6$ \end{tabular} &(0,0)&(0,2)&(2,0)&(0,4)&(4,0)&(2,2)&(2,4)&(4,2)&(4,4) \cr
				\hline
				$\om_2(v_1,v_2)$& 7&7&7&5&5&5&6&6&2\cr
				\hline
				$\om_2(e_1)$&1&1&1&3&1&3&2&4&4 \cr
				\hline
				$\om_2(e_2)$&2&1&1&1&3&1&4&2&3 \cr
				\hline
			\end{tabular}
			
		\end{center}
		\caption{\label{table_0} Step 1,2 of ALGORITHM.}
	\end{table}
	
	\noindent We then put $w(v_1):=\sigma_{\om_2}(v_1),w(v_2):=\sigma_{\om_2}(v_2)$.

	Observe that after the first and the second steps of ALGORITHM, the vertex-colouring $w$ and the edge-weighting $\om_2$ have the following properties.
	
	\begin{obs}\label{obs1}
	
	\begin{itemize}
	\item $\sigma_{\om_2}(v_1),\sigma_{\om_2}(v_2)\in \{0,1,2\} \pmod 6$,
	\item $w(v_1)\neq w(v_2)$, namely $w(v_1)\not\equiv w(v_2)\pmod 6 $,
	\item the weight of the first successor of $v_i$ is at most 4 for $i\in\{1,2\}$.
	\end{itemize}
	\end{obs}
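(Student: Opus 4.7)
The plan is to establish Observation~\ref{obs1} by direct case analysis over the nine columns of Table~\ref{table_0}, after first verifying that the edges $v_1v_2$, $e_1$, $e_2$ referred to in Step~1,2 are well defined. By Lemma~\ref{lem:vertex_order}(ii), the vertex $v_2$ has a predecessor, which can only be $v_1$, so $v_1v_2\in E(G)$. Lemma~\ref{lem:vertex_order}(i) guarantees that $v_1$ and $v_2$ each have further neighbours beyond each other, yielding $e_1$ and $e_2$; pairwise distinctness of the three edges follows immediately from their definitions.

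The key formula is that, since every edge starts with weight~$4$ under $\om_1$ and Step~1,2 changes only the weights of $v_1v_2$ and $e_i$ among the edges incident with $v_i$, one has
\[
\sigma_{\om_2}(v_i)=4d_G(v_i)+\om_2(v_1v_2)+\om_2(e_i)-8
\]
for $i\in\{1,2\}$. Given the nine possible residues of $(4d_G(v_1),4d_G(v_2))\pmod 6$, I would substitute the prescribed values of $\om_2(v_1v_2)$, $\om_2(e_1)$, $\om_2(e_2)$ from Table~\ref{table_0} column by column and reduce modulo~$6$. A short tabulation confirms that in each column $\sigma_{\om_2}(v_1),\sigma_{\om_2}(v_2)\in\{0,1,2\}\pmod 6$ (the first bullet) and that the two residues differ (the second bullet, recalling that $w(v_i)=\sigma_{\om_2}(v_i)$). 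For instance, in the column headed $(0,0)$ one obtains $\sigma_{\om_2}(v_1)\equiv 0+7+1-8\equiv 0$ and $\sigma_{\om_2}(v_2)\equiv 0+7+2-8\equiv 1$ modulo~$6$; the other eight columns are handled in the same way.

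For the third bullet, because $v_1$ and $v_2$ are processed together in Step~1,2, the ``first successor'' of $v_1$ referred to in the algorithm is the endpoint of $e_1$, and similarly the first successor of $v_2$ is the endpoint of $e_2$. Inspection of the last two rows of Table~\ref{table_0} shows that $\om_2(e_1)$ and $\om_2(e_2)$ always take values in $\{1,2,3,4\}$, giving weight at most~$4$ as required. There is no conceptual obstacle: Table~\ref{table_0} was designed precisely so that all three conclusions hold simultaneously in every column, so the only real risk is arithmetic bookkeeping across the nine columns.
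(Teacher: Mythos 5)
Your proposal is correct and matches the paper's (implicit) justification: the observation is stated without proof precisely because it follows from checking the nine columns of Table~\ref{table_0} via the formula $\sigma_{\om_2}(v_i)=4d_G(v_i)+\om_2(v_1v_2)+\om_2(e_i)-8$, which is exactly what you do, and the column-by-column arithmetic does check out (all residues land in $\{0,1,2\}\pmod 6$, the two residues differ in each column, and $\om_2(e_1),\om_2(e_2)\le 4$ throughout).
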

	\medskip
	\noindent {\bf Step $i,\;i\in\{3,\ldots,n\}$
	}
	
	Let $v_{k}$ be the first successor of $v_i$. For an edge $e$ the weight $w(e)$ can only be modified if either $e=v_jv_i$ with $j<i$ or $e=v_iv_{k}$. The weight of every edge must be in $\{1,\ldots, 7\}$. Furthermore, the modification of weights has to result in an edge-weighting $\om_i$ that satisfies the following properties:
	\begin{enumerate}[(1)]
		\item  $\om_i(v_iv_{k})\le 4$. 
		
		\item If $v_i\in V'$, then $\siom(v_i)\in \{0,1,2\} \pmod 6$. 
		
		\item Let $j< i$ and $v_j\in N(v_i)$.
		
		\begin{enumerate}[(i)]
			\item If $v_i\in V'$, then $\siom(v_i)\neq w(v_j)$.
			
			\item If $v_i\in V''$, then        
			\begin{itemize}
				\item if $v_j$ has no successor that follows $v_i$, then $\siom(v_i)\neq \siom(v_j)$;      
				\item if $v_j$ has a successor that follows $v_i$, then $\siom(v_i)\notin \{w(v_j),w(v_j)+3\}$.
			\end{itemize}
		\end{enumerate}
		
		\item  If $j<i$ and  $v_j\in N(v_i)$, then  $\siom(v_j)\in\{w(v_j),w(v_j)+3\}$.

		\item If $d(v_i)\ge 6$, then the set $\{v_jv_i:j< i,\; v_j\in N(v_i)\}$ is not monochromatic or the weight of  edges $\{v_jv_i:j< i,\; v_j\in N(v_i)\}$ is not in $\{\om_i(v_iv_{k}),\om_i(v_iv_{k})+3\}$.        
	\end{enumerate}
	
	\noindent When we obtain an edge-weighting that satisfies  properties (1)--(5), we assign $w(v_i):=\siom(v_i)$.
	
	\vspace{0.5cm}
	We will often use the following property of the vertex-colouring $w$ given by ALGORITHM:
	
	\begin{obs}\label{observation}
		If $u\in V'$, then $w(u)\in  \{0,1,2\} \pmod 6$.
	\end{obs}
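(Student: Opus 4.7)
The plan is to read Observation \ref{observation} directly off the design of ALGORITHM, since the congruence condition $w(u)\in\{0,1,2\}\pmod 6$ is encoded into the assignment rules at every step. No genuine induction is needed beyond recording how $w$ is defined and recalling what the algorithm forces.

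First I would check that $v_1,v_2\in V'$, so that the base case of Step 1,2 is relevant. Condition (i) of Lemma \ref{lem:vertex_order} yields $d(v_1),d(v_2)\ge 2$. Since $v_1$ has no predecessor, all its neighbours are successors and hence $v_1\in V'$; and condition (ii) forces $v_1v_2\in E(G)$, so the remaining neighbour(s) of $v_2$ guaranteed by $d(v_2)\ge 2$ must lie in $\{v_3,\ldots,v_n\}$, putting $v_2\in V'$ as well. For these two vertices, Step 1,2 assigns $w(v_1):=\sigma_{\om_2}(v_1)$ and $w(v_2):=\sigma_{\om_2}(v_2)$, and Observation \ref{obs1} (verified by inspecting Table \ref{table_0} in each of the nine cases) gives $\sigma_{\om_2}(v_1),\sigma_{\om_2}(v_2)\in\{0,1,2\}\pmod 6$.

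For every $i\ge 3$ with $v_i\in V'$, property (2) of Step $i$ explicitly requires $\sigma_{\om_i}(v_i)\in\{0,1,2\}\pmod 6$, and the algorithm then sets $w(v_i):=\sigma_{\om_i}(v_i)$. Thus $w(v_i)\in\{0,1,2\}\pmod 6$ for every $v_i\in V'$, giving Observation \ref{observation}. Note that once $w(v_i)$ is set it is never reassigned: later steps may alter $\sigma_{\om_j}(v_i)$ for $j>i$, but property (4) constrains any such change to keep $\sigma_{\om_j}(v_i)\in\{w(v_i),w(v_i)+3\}$, so $w(v_i)$ remains fixed at its value from Step $i$.

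The substantive difficulty is not in this observation but in the surrounding executability claim, namely that a valid Step $i$ can always be carried out—that is, that edge weights in $\{1,\ldots,7\}$ simultaneously satisfying properties (1)–(5) always exist for $v_i$. That executability is what the main body of the proof of Theorem \ref{thm:seven_colours} must establish; once it is in place, Observation \ref{observation} follows from property (2) and the assignment rule $w(v_i):=\sigma_{\om_i}(v_i)$ with no further work.
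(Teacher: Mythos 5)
Your proposal is correct and matches the paper's treatment: the paper states this observation without a separate proof precisely because it is immediate from Observation~\ref{obs1} (for $v_1,v_2$) together with property~(2) and the assignment $w(v_i):=\sigma_{\omega_i}(v_i)$ at each step $i\ge 3$, with executability delegated to Lemma~\ref{executable}. Your elaboration (checking $v_1,v_2\in V'$ and noting that $w(v_i)$ is never reassigned) just makes explicit what the paper leaves implicit.
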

	
	\begin{lemma}\label{executable}
		Every step $i\;(i\in \{3,\ldots,n\})$ of ALGORITHM is executable. 
	\end{lemma}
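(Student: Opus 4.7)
The approach is induction on $i$, maintaining property~(4) as an invariant so that at the start of step $i$ every predecessor $v_j$ of $v_i$ satisfies $\sigma_{\omega_{i-1}}(v_j) \in \{w(v_j), w(v_j)+3\}$. The first step is to catalogue the available moves: for each backward edge $v_jv_i$ of current weight $c$, the only new weights admissible under property~(4) lie in $\{c-3,c,c+3\} \cap \{1,\ldots,7\}$, yielding at least two options per such edge; and when $v_i \in V'$, the forward edge $v_iv_k$ may be freely assigned any value in $\{1,2,3,4\}$ by property~(1).

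The central arithmetic observation is that a $\pm 3$ shift on any backward edge alters $\sigma_{\omega_i}(v_i)$ by $3 \pmod{6}$, while varying $x = \omega_i(v_iv_k)$ over $\{1,2,3,4\}$ adjusts $\sigma_{\omega_i}(v_i)$ by one of $\{0,-1,-2,-3\}$. Since Lemma~\ref{lem:vertex_order}(ii) guarantees at least one predecessor of $v_i$, the two independent sources of shifts jointly let $\sigma_{\omega_i}(v_i)$ reach every residue modulo~$6$ whenever $v_i \in V'$, and at least two residues differing by~$3$ whenever $v_i \in V''$. Observation~\ref{observation} further tells us that each forbidden value $w(v_j)$ already lies in $\{0,1,2\} \pmod{6}$, which makes the residue constraint~(2) and the distinguishing constraints~(3) interact favourably: fixing the right residue of $\sigma_{\omega_i}(v_i)$ already takes care of many predecessors at once.

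For $v_i \in V'$ I would then aim $\sigma_{\omega_i}(v_i)$ at some residue in $\{0,1,2\} \pmod{6}$ (property~(2)) and exploit the residual freedom to avoid the at most $d(v_i)$ forbidden values $w(v_j)$ from~(3)(i), together with the monochromaticity pattern of~(5), which only triggers when $d(v_i) \ge 6$ (a regime in which many backward edges are available and hence the non-monochromatic condition is easy to arrange). For $v_i \in V''$, property~(2) drops out and Lemma~\ref{lem:vertex_order}(iii) restricts the ``dangerous'' predecessors to at most one, so the argument is analogous but lighter: the distinguishing conditions in~(3)(ii) reduce to avoiding a bounded list of integer sums. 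I expect the main obstacle to be the tight low-degree case $d(v_i) \in \{2,3\}$ with $v_i \in V'$: there the set of realizable sums in a target residue class may shrink to only a handful of values, and a direct case analysis on the current weights of the two or three incident edges (and whether any of them were already set in Steps~1,2 via Table~\ref{table_0}) is needed to verify that at least one candidate simultaneously satisfies~(1)--(5).
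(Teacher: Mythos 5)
Your overall strategy is the same as the paper's: maintain property~(4) as an invariant, normalise each backward edge to its base weight so that $\sigma_{\omega_{i-1}}(v_{j_\ell})=w(v_{j_\ell})$, observe that the admissible moves are $+3$ on backward edges and $-1,-2,-3$ on the forward edge, so that $\sigma_{\omega_i}(v_i)$ ranges over an interval of $3d+4$ consecutive integers when $v_i\in V'$, and then count residues in $\{0,1,2\}\pmod 6$ against the values forbidden by~(3), using Observation~\ref{observation} to collapse most of those constraints. That is exactly the paper's framework. However, as a proof the proposal has two concrete gaps, both located where you declare the work easy.

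First, your claim that property~(5) ``only triggers when $d(v_i)\ge 6$, a regime in which many backward edges are available,'' is false: the number of backward edges equals the number of \emph{predecessors}, not the degree. A vertex of degree~$6$ may have a single predecessor, in which case the set $\{v_jv_i: j<i\}$ is one edge and hence trivially monochromatic, so~(5) forces $\omega_i(e_1)\notin\{\omega_i(v_iv_k),\omega_i(v_iv_k)+3\}$. Satisfying this \emph{simultaneously} with~(2) and~(3) is the delicate part of the argument (the paper's Subcases~1.2 and~1.3, with their residue tables, exist precisely for high-degree vertices with one or two predecessors); your proposed ``direct case analysis on the tight low-degree case $d(v_i)\in\{2,3\}$'' looks in the wrong place, since~(5) is vacuous there. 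Second, the $V''$ case is not ``lighter'': only the $d+1$ values $\alpha,\alpha+3,\ldots,\alpha+3d$ are reachable, while~(3)(ii) can forbid up to $(d-1)+2=d+1$ values, so a bare count does not close. The argument only succeeds because all predecessors lie in $V'$ and hence have colours in $\{0,1,2\}\pmod 6$, and even then one must sometimes push the sum to $\alpha+9$, $\alpha+12$ or $\alpha+15$ while simultaneously breaking monochromaticity of the backward edges, which consumes the hypothesis $d(v_i)\ge 6$; the subcases $d(v_i)\in\{1,2\}$ in $V''$ also need separate (if short) treatment that the proposal omits entirely. Until these two regimes are worked out, the lemma is not proved.
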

	
	\begin{proof}
		Let us consider the $i$-th step of ALGORITHM. We prove that we can modify the weights of edges between $v_i$ and its predecessors, and between $v_i$ and its first successor (if it exists), in such a way that we obtain an edge-weighting that satisfies  properties (1)--(5). We consider two cases, whether $v_i$ has a successor or not, each leading to several subcases.
		

		\medskip
		\noindent {\bf Case 1 $v_i\in V'$, \emph{i.e.}, $v_i$ has a successor}

		Let $v_{j_1},v_{j_2},\ldots,v_{j_d}$ be the predecessors of $v_i$ and $v_{j_{\ell}}v_i=e_{\ell}$ for $\ell \in\{1,\ldots, d\}$. Let $e'$ be the edge that joins $v_i$ with its first successor. Recall that $\om_{i-1}(e')=4$, $\om_{i-1}(e_{\ell})=4$ if $v_i$ is not the first successor of $v_{j_{\ell}}$, and $\om_{i-1}(e_{\ell})\le 4$ if $v_i$ is  the first successor of $v_{j_{\ell}}$.  We put the lower possible weights on every $e_{\ell}$ for $\ell \in\{1,\ldots, d\}$, \emph{i.e.}  we provisionally modify weights in the following way: $\om'_{i-1}(e_{\ell}):=\om_{i-1}(e_{\ell})-3$ if $\siomm(v_{j_{\ell}})=w(v_{j_{\ell}})+3$ and $\om'_{i-1}(e):=\om_{i-1}(e)$, otherwise. Observe that such a modification results in weights that belong to $\{1,\ldots ,4\}$, since $\om_{i-1}(e_{\ell})<4$ only if $v_i$ is the first successor of $v_{j_{\ell}}$ and then $\siomm(v_{j_{\ell}})=w(v_{j_{\ell}})$, otherwise $\om_{i-1}(e_{\ell})=4$.  To simplify the notations, we state $\om_{i-1}:=\om'_{i-1}$. After such a modification  we have $\siomm(v_{j_{\ell}})=w(v_{j_{\ell}})$ for $\ell \in\{1,\ldots, d\}$. 
		
		We will modify edges by adding 3 to $e_{\ell} $ for some $\ell \in\{1,\ldots, d\}$ or subtracting  1,~2 or 3 from the weight of $e'$.  As we observe above, by adding 3 to $e_{\ell} $, the weight of $e_{\ell}$ is still in $\{1,\ldots,7\}$. If we subtract 1, 2, or 3 from the weight of $e'$, then  the wight of $e'$ is  in $\{1,2,3\}$, since $\om_{i-1}(e')=4$.
		Furthermore, observe that adding 3 to some $e_{\ell} $  or subtracting  1, 2 or 3 from the weight of $e'$ maintains  properties (1) and (4). 
		We show now that, by such a modification of weights, we are able to result in the edge-weighting that also satisfies  properties (2),(3), and (5).
		
		By the modification weights of edges $e_1,\ldots, e_d,e'$, we see that $\siom(v_i)$ can take any value in the interval $[\alpha-3,\alpha -2,\ldots,\alpha+3d]$, where $\alpha=\siomm(v_i)$. 
		
		
		\medskip
		\noindent {\bf Subcase 1.1 $v_i$ has at least three predecessors}
		
		To satisfy the property (2), we have to choose weights for edges such that $\siom(v_i)\in \{0,1,2\} \pmod 6$, in the interval there are at least $d+3$ integers that are  congruent to 0, 1 or 2 $\pmod 6$. The property (3) can block at most $d$ values and hence $3$ values remain open for $\siom(v_i)$. Let $\beta_i\in [\alpha-3,\alpha -2,\ldots,\alpha+3d]\;(i\in\{1,2,3\})$ be the values open for $\siom(v_i)$, \emph{i.e.}   $\beta_i \in\{0,1,2\} \pmod 6$ and $\beta_i\neq\siom(v_{j_{\ell}})$ for all $\ell\in\{1,\ldots,d\}$.
		Let us denote $\beta_i=\alpha+3p_i-r_i$, where $p_i\in\{0,\ldots , d\}$ and $r_i\in\{0,1,2,3\}$ (\emph{i.e.} $p_i$ denotes the number of edges to which we have to add 3, $r_i$ denotes the value which we have to subtract from the weight of $e'$).  Now we have to guarantee the property (5).
		
		Suppose that there is $i$ such that $p_i\in\{1,\ldots, d-1\}$. We choose exactly $p_i$ edges from the set $\{e_1,\ldots,e_d\}$ and add 3 to their weights, next we subtract $r_i$ from the weight of $e'$. Since we choose $p_i$ edges from the set of $d$ edges and $0<p_i<d$, we can do this in such a way that  the property (5) holds. 
		
		Suppose that $p_i=0$ or $p_i=d$ for all $i\in\{1,2,3\}$. If edges $\{e_{\ell}:\ell\in\{1,\ldots ,d\}\}$ are not monochromatic, then every $\beta_i$ is good for $\siom(v_i)$. Thus,  we reweight only the edge $e'$ with $\om_i(e'):=\om_{i-1}(e')-r_1$, whenever $p_1=0$ or $\om_i(e_{\ell}):=\om_{i-1}(e_{\ell})+3$ for $\ell\in\{1,\ldots ,d\}$ and $\om_i(e'):=\om_{i-1}(e')-r_1$, otherwise. 
		
		Then assume that $p_i=0$ or $p_i=d$ and edges $\{e_{\ell}:\ell\in\{1,\ldots ,d\}\}$ are  monochromatic. 
		Thus  $\beta_i\in\{\alpha-3,\alpha-2,\alpha-1,\alpha+3d-2,\alpha+3d-1,\alpha+3d\}$ for all $i\in\{1,2,3\}$. There are at least two indexes $i$, say $i=1$ and $i=2$, such that  $\beta_1,\beta_2\in\{\alpha-3,\alpha-2,\alpha-1\}$ or $\beta_1,\beta_2\in\{\alpha+3d-2,\alpha+3d-1,\alpha+3d\}$ and so  we have two choices for the weight of $e'$. We can see that one of them results in an edge-weighting $\om_i$ such that the weight of  edges $\{e_{\ell}:\ell\in\{1,\ldots ,d\}\}$ is not in $\{\om_i(e'),\om_i(e')+3\}$ and hence the property (5) holds.  
		
		
		\medskip
		\noindent {\bf Subcase 1.2 $v_i$ has two predecessors}
		
		\begin{table}[!h]
			\begin{center}
				\begin{tabular}{||c||c|c|c|c|c|c|c|c|c|c||}
					\hline \hline
					&$\alpha-3$&$\alpha-2$&$\alpha-1$&$\alpha$&$\alpha+1$&$\alpha+2$&$\alpha+3$&$\alpha+4$&$\alpha+5$&$\alpha+6$\cr
					\hline \hline
					1&0&1&2&3&4&5&0&1&2&3 \cr
					\hline
					2&1&2&3&4&5&0&1&2&3&2 \cr
					\hline
					3&2&3&4&5&0&1&2&3&4&5 \cr
					\hline
					4&3&4&5&0&1&2&3&4&5&0 \cr
					\hline
					5&4&5&0&1&2&3&4&5&0&1 \cr
					\hline
					6&5&0&1&2&3&4&5&0&1&2 \cr
					\hline
				\end{tabular}
				
			\end{center}
			\caption{\label{table_1} Subcase 1.2, all possible values $\pmod 6$ in the interval.}
		\end{table}

		Thus, in the interval $[\alpha-3,\alpha -2,\ldots,\alpha+3d]=[\alpha-3,\alpha -2,\ldots,\alpha+6]$, there are at least $4$ integers that are  congruent to 0, 1 or 2 $\pmod 6$. The property (3) can block at most two values and hence two values remain open for $\siom(v_i)$. Let $\beta_i\in [\alpha-3,\alpha -2,\ldots,\alpha+6]\;(i\in\{1,2\})$ be the values open for $\siom(v_i)$. Similarly as above, let $\beta_i=\alpha + 3p_i-r_i$, where $p_i\in\{0,1,2\}$ and $r_i\in\{0,1,2,3\}$ for $i\in\{1,2\}$. 
		
		Suppose that either $p_1=1$ or $p_2=1$, say $p_1=1$. Then we add 3 to either $e_1$ or $e_2$ to obtain the edge-weighting such that $\om_i(e_1)\neq \om_i(e_2)$ and put $\om_i(e'):=\om_{i-1}(e')-r_1$. 
		
		Thus, we may assume that $p_i\in\{0,2\}$ and so $\beta_i\in\{\alpha-3,\alpha-2,\alpha-1,\alpha+4,\alpha+5,\alpha+6\}$ for all $i\in\{1,2\}$. If the edges $e_1$ and $e_2$ have different  weights, then  every $\beta_i$ is good for $\siom(v_i)$. Thus,  we reweight only the edge $e'$ with $\om_i(e'):=\om_{i-1}(e')-r_1$, whenever $p_1=0$ or $\om_i(e_1):=\om_{i-1}(e_1)+3,\om_i(e_2):=\om_{i-1}(e_2)+3$ and $\om_i(e'):=\om_{i-1}(e')-r_1$, otherwise.  
		
		Assume then that $p_i\in\{0,2\}$ and that $e_1$ and $e_2$ have the same weight. If  we have either $\beta_1,\beta_2\in\{\alpha-3,\alpha-2,\alpha-1\}$ or $\beta_1,\beta_2\in\{\alpha+4,\alpha+5,\alpha+6\}$, then we have two choices for the weight of $e'$. We can see that one of them gives an edge-weighting $\om_i$ such that the weight of  the edges $\{e_1,e_2\}$ is not in $\{\om_i(e'),\om_i(e')+3\}$ and hence the property (5) holds. 
		
		We claim that we always have either $\beta_1,\beta_2\in\{\alpha-3,\alpha-2,\alpha-1\}$ or $\beta_1,\beta_2\in\{\alpha+4,\alpha+5,\alpha+6\}$. 
		Let us consider the integers $\{\alpha,\alpha+1,\alpha+2,\alpha+3\}$, we can see that there is at least one value congruent to 0, 1 or $2 \pmod 6$ (see Table \ref{table_1}). We may assume that all values congruent to 0, 1 or $2 \pmod 6$ are blocked by the property (3), otherwise we are in the case considered above. Thus, we are not in the case described in lines 3 or 4 of Table \ref{table_1}.
		If there is exactly one value congruent to 0, 1 or $2 \pmod 6$ in $\{\alpha,\alpha+1,\alpha+2,\alpha+3\}$ (it is blocked by the property (3)), then there are five values congruent to 0, 1 or $2 \pmod 6$ in $\{\alpha-3,\alpha-2,\alpha-1,\alpha+4,\alpha+5,\alpha+6\}$ (see Table \ref{table_1}, lines 1 and 6),  at least four are not blocked by the property (2), and hence two of them are in either $\{\alpha-3,\alpha-2,\alpha-1\}$ or $\{\alpha+4,\alpha+5,\alpha+6\}$. 
		If there are two values congruent to 0, 1 or $2 \pmod 6$ in  $\{\alpha,\alpha+1,\alpha+2,\alpha+3\}$, then, there are three values congruent to 0, 1 or $2 \pmod 6$ in $\{\alpha-3,\alpha-2,\alpha-1,\alpha+4,\alpha+5,\alpha+6\}$ (see Table \ref{table_1}, lines 2 and 5) and none of them is blocked by the property (3) and hence two of them are in either $\{\alpha-3,\alpha-2,\alpha-1\}$ or $\{\alpha+4,\alpha+5,\alpha+6\}$.
		
		\medskip
		\noindent {\bf Subcase 1.3 $v_i$ has one predecessor}
		
		\begin{table}[!h]
			\begin{center}
				
				\begin{tabular}{||c||c|c|c|c|c|c|c||}
					\hline \hline
					&$\alpha-3$&$\alpha-2$&$\alpha-1$&$\alpha$&$\alpha+1$&$\alpha+2$&$\alpha+3$\cr
					\hline \hline
					1&0&1&2&3&4&5&0 \cr
					\hline
					2&1&2&3&4&5&0&1 \cr
					\hline
					3&2&3&4&5&0&1&2 \cr
					\hline
					4&3&4&5&0&1&2&3 \cr
					\hline
					5&4&5&0&1&2&3&4 \cr
					\hline
					6&5&0&1&2&3&4&5 \cr
					\hline
				\end{tabular}
				
			\end{center}
			\caption{\label{table_2} Subcase 1.3, all possible values $\pmod 6$ in the interval. }
		\end{table}
		
		Suppose first that $\alpha\notin\{0,1,2\} \pmod 6$. Then, there are at least four values congruent to 0, 1 or $2 \pmod 6$ in the interval $[\alpha-3,\alpha -2,\ldots,\alpha+3d]=[\alpha-3,\alpha -2,\ldots,\alpha+3]$ (see Table \ref{table_2}). One of them can be blocked by the property (3); so three values remain open for $\siom(v_i)$. Let $\beta_i\;(i\in\{1,2,3\})$ be the values open for $\siom(v_i)$. Thus at least two of them are in either $\{\alpha-3,\alpha-2,\alpha-1\}$ or $\{\alpha+1,\alpha+2,\alpha+3\}$, and so  we have two choices for the weight of $e'$. We can see that one of them gives an edge-weighting $\om_i$ such that  $\om_i(e_1)\notin\{\om_i(e'),\om_i(e')+3\}$, which guarantee that the property (5) holds. 
		
		Finally, suppose that $\alpha\in\{0,1,2\} \pmod 6$. Assume that there is $\beta_i$ such that $\beta_i=\alpha$ (\emph{i.e.} $\alpha$ is not blocked by the property (3) for $\siom(v_i)$). Recall that  $\om_{i-1}(e_1)\le 4$ and $\om_{i-1}(e')= 4$. If $\om_{i-1}(e_1)\neq 4$, then we assign $\om_i(e):=\om_{i-1}(e)$ for every $e\in E(G)$. If $\om_{i-1}(e_1)=4$, then we reweight edges $\om_{i}(e_1):=7$ and $\om_{i}(e'):=1$. 
		Suppose that $\alpha $ is blocked by  the property (3). If $\alpha\equiv 0 \pmod 6$, then there is a value congruent to 1 and there is a value congruent to 2 $\pmod 6$ in $\{\alpha+1,\alpha+2,\alpha+3\}$ (see Table \ref{table_1}, line 4) and hence one of them gives an edge-weighting $\om_i$ such that  $\om_i(e_1)\notin\{\om_i(e'),\om_i(e')+3\}$.  If $\alpha\equiv 2 \pmod 6$, then there is a value congruent to 0 and there is a value congruent to 1 $\pmod 6$ in $\{\alpha-3,\alpha-2,\alpha-1\}$ (see Table \ref{table_1}, line~6); so similarly as above we are done. If $\alpha\equiv 1 \pmod 6$, then $\beta_1=\alpha -1$ and $\beta_2=\alpha +1$ (see Table \ref{table_1}, line 5). If $\om_{i-1}(e_1)\neq 3$, then   we assign $\om_i(e'):=3$ and so $\siom(v_i)=\alpha -1$. Otherwise, we modify the  weights of two edges $\om_i(e_1):=6,\om_i(e'):=2$ and then $\siom(v_i)=\alpha+1$.

		
		\medskip
		\noindent {\bf Case 2 $v_i\in V''$, \emph{i.e.}, $v_i$ has no successor}
		
		Let $v_{j_1},v_{j_2},\ldots,v_{j_d}$ be the  neighbours of $v_i$ and $v_{j_{\ell}}v_i=e_{\ell}$ for $\ell \in\{1,\ldots, d\}$. Let $v_{j_{1}}$ be a vertex that has a successor in $\{v_{i+1},\ldots,v_n\}$ if such one exists. Recall that by our choice of the ordering of vertices {\bf v}, there is at most one such a vertex (Lemma \ref{lem:vertex_order} (iii)). To guarantee the property (3), we choose the weight of the edges incident with $v_i$ in such a way that $\siom(v_i)\neq \siom(v_{j_{\ell}})$ for $\ell \in\{2,\ldots, d\}$ and $\siom(v_i)\notin \{w(v_{j_1}),w(v_{j_1})+3\}$ even if $v_{j_1}$ has no successor in $\{v_{i+1},\ldots,v_n\}$.

		Similarly as in Case 1, we put the lower possible weights on every $e_{\ell}$ for $\ell \in\{1,\ldots, d\}$,  we provisionally modify the weights of edges in the following way: $\om'_{i-1}(e_{\ell}):=\om_{i-1}(e_{\ell})-3$ if $\siomm(v_{j_{\ell}})=w(v_{j_{\ell}})+3$, and $\om'_{i-1}(e):=\om_{i-1}(e)$ otherwise. Similarly as in Case 1, we can see that after such a modification, the weight of $e_{\ell}$ is in  $\{1,2,3,4\}$. To simplify notations, we state $\om_{i-1}=\om'_{i-1}$. Observe that $\siomm(v_{j_{\ell}})=w(v_{j_{\ell}})$ for $\ell \in\{1,\ldots, d\}$  and   $\siomm(v_{j_{\ell}})\in\{0,1,2\} \pmod 6$ for $\ell \in\{1,\ldots, d\}$ (every $v_{j_{\ell}}$ belongs to $V'$). 
		
		We will modify weights by adding 3 to $\om_{i-1}(e_{\ell})$ for some $\ell \in\{1,\ldots, d\}$. We can see that after adding 3 to the weight of $e_{\ell}$, the weight is still in $\{1,\ldots, 7\}$. Furthermore,  adding 3 to some $e_{\ell} $   maintains the property  (4). Since $v_i$ has no successor, properties (1) and (2)  hold. We prove that we can add 3 to some edges in such a way that  properties (3) and (5) will be satisfied. Let $\siomm(v_i)=\alpha$. 
		
		
		\medskip
		\noindent{\bf Subcase 2.1 $d(v_i)\ge 3$}
		
		Observe that if $\alpha\in \{0,1,2\} \pmod 6$, then $\alpha+3\notin\{0,1,2\} \pmod 6$. Thus, we consider two cases.
		
		\medskip
		\noindent{\bf Subcase 2.1.1 $\alpha\in\{0,1,2\} \pmod 6$}
		
		If $\alpha \neq \siomm(v_{j_1})$ and $\alpha\neq \siomm(v_{j_{\ell}})$ for $\ell\in\{2,\ldots,d\}$, then we assign $\om_i(e):=\om_{i-1}(e)$ for all $e\in E(G)$. Recall that $w(v_{j_1})=\siomm(v_{j_1})$ and so $\siom(v_{i})\neq w(v_{j_1})$. We also have $\siom(v_{i})\neq w(v_{j_1})+3$, since $\siom(v_{i})\in \{0,1,2\} \pmod 6$ and $w(v_{j_1})+3 \notin \{0,1,2\} \pmod 6$. Thus, $\om_i$ satisfies (3). If the edges incident with $v_i$ are not monochromatic or $d(v_i)\le 5$, then we are done. Otherwise, we reweight the edge $\om_i(e_1):=\om_{i-1}(e_1)+3$. Thus, $\siom(v_i)=\alpha+3$. Our assumption $\alpha \in\{0,1,2\} \pmod 6$ implies that $\alpha +3\notin\{0,1,2\} \pmod 6$ and consequently $\siom(v_{i})\neq \siom(v_{j_{\ell}})$ for $\ell\in\{2,\ldots,d\}$. Furthermore, $\alpha+3=\siom(v_i)\neq w(v_{j_1})$, since $\alpha+3\notin \{0,1,2\} \pmod 6$ and $w(v_{j_1}) \in \{0,1,2\} \pmod 6$. We also have $\siom(v_{i})\neq w(v_{j_1})+3$, since $w(v_{j_1})+3=\siomm(v_{j_1})+3\neq \alpha+3=\siom(v_{i})$
		Thus, we have a weighting $\om_i$ that satisfies  properties (1)--(5). 
		
		\medskip
		Assume now that $\alpha =\siomm(v_{j_1})$ or there is $\ell\in\{2,\ldots , d\}$ such that $\alpha=\siomm(v_{j_{\ell}})$. 
		
		Suppose first that $\alpha =\siomm(v_{j_1})$.  Assume that there are at least  two vertices $v_{j_a},v_{j_b}\in\{v_{j_2},\ldots v_{j_d}\}$ such that $\siomm (v_{j_a}) \neq \alpha +6,\siomm (v_{j_b}) \neq \alpha +6$. We assign $\om_i(e_1):=\om_{i-1}(e_1)+3,\om_i(e_a):=\om_{i-1}(e_a)+3,\om_i(e_b):=\om_{i-1}(e_b)+3$. Thus, $\siom(v_i)=\alpha +9$. We show that the property (3) holds. Since $\alpha+9\notin \{0,1,2\} \pmod 6$, we have $\siomm(v_{j_{\ell}})\neq \alpha +9$ for $\ell\in\{2,\ldots,d\}$ and so $\siom(v_i)\neq \siom(u)$ for $u\in \{v_{j_2},\ldots,v_{j_d}\}\setminus \{v_{j_a},v_{j_b}\}$. Our assumptions $\siomm (v_{j_a}) \neq \alpha +6,\siomm (v_{j_b}) \neq \alpha +6$ imply that $\siom (v_{j_a}) \neq \alpha +9=\siom(v_i),\siom (v_{j_b}) \neq \alpha +9=\siom(v_i)$. Now consider $v_{j_1}$. Since $w(v_{j_1})=\alpha $, we have $\siom(v_i)\neq w(v_{j_1})$ and $\siom(v_i)\neq w(v_{j_1})+3$. Thus, the edge-weighting $\om_i$ verifies the property (3). If $d(v_i)\le 5$ or edges incident with $v_i$ are not monochromatic, then we are done. Otherwise, if there is another vertex $v_{j_c}\in \{v_{j_2},\ldots v_{j_d}\}$ such that $\siomm (v_{j_c}) \neq \alpha +6$, then we can reweight edges in the following way: $\om_i(e_1):=\om_{i-1}(e_1)+3,\om_i(e_a):=\om_{i-1}(e_a)+3,\om_i(e_c):=\om_{i-1}(e_c)+3$. Thus, suppose that this is not the case: in $\{v_{j_2},\ldots v_{j_d}\}$, there are at most two vertices with colour other than $\alpha +6$.  Then, we add 3 to the weight of $e_1$ and edges incident with vertices with colours other that  $\alpha +6$.  Next,  from the remaining edges, we choose  one edge if we have two vertices with colours other that  $\alpha +6$, two edges if we have one vertex with colour other that  $\alpha +6$ and three edges if  we have no vertices with colour other that  $\alpha +6$, and add 3 to their weights.  Thus, we   obtain $\siom(v_i)=\alpha+12$. Since $d(v_i)\ge 6$,  we can choose edges for the reweighting in such a way that the edges incident with $v_i$ are not  monochromatic. Observe that  the only neighbours of $v_i$ that have in $\om_i$ the same colour as in $\om_{i-1}$ are those with colour $\alpha+6$. Those vertices are distinguished with $v_i$ in $\om_i$. Now, the remaining neighbours of $v_i$ have   colours that are not in $\{0,1,2\} \pmod 6$. Thus, they  are also distinguished from $v_i$ in $\om_i$, since $\alpha+12\in\{0,1,2\} \pmod 6$. So the edge-weighting $\om_i$ satisfies  properties  (1)--(5). 
		
		\medskip
		Finally, assume that $\alpha \neq\siomm(v_{j_1})$ and  there is $\ell\in\{2,\ldots , d\}$ such that $\alpha=\siomm(v_{j_{\ell}})$. 
		If the edges $\{e_2,\ldots,e_d\}$ are not monochromatic, or the weight of $\{e_2,\ldots,e_d\}$ is different from $\om_{i-1}(e_1)+3$, or $d(v_i)\le 5$, then we assign $\om_i(e_1):=\om_i(e_1)+3$. Since $\siom(v_i)=\alpha+3\notin \{0,1,2\} \pmod 6$, $v_i$ is distinguished from every vertex in $\{v_{j_2},\ldots v_{j_d}\}$. Our assumption $\alpha \neq\siomm(v_{j_1})$ implies $\siom(v_i)\neq w(v_{j_1})+3$. Furthermore, $\siom(v_i)\neq w(v_{j_1})$ since $\siom(v_i)\notin \{0,1,2\} \pmod 6$ and $w(v_{j_1})\in\{0,1,2\} \pmod 6$. Thus, the edge-weighting $\om_i$ verifies properties (1)--(5). 
		Thus, we may assume  that $d(v_i)\ge 6$ and $\om_{i-1}(e_2)=\ldots=\om_{i-1}(e_d)=\om_{i-1}(e_1)+3$. 
		
		If there is a $v_{j_a}\in\{v_{j_2},\ldots v_{j_d}\}$ with colour other than $\alpha$, then  we assign $\om_i(e_a):=\om_{i-1}(e_a)+3$. The edge-weighting $\om_i$ verifies properties (1)--(5) (recall that $\alpha+3\neq \siom(v_{j_{\ell}})$, since $\siom(v_{j_{\ell}})\in \{0,1,2\} \pmod 6$ for $\ell\in \{e_2,\ldots,e_d\}\setminus\{e_a\}$ and similarly as above we can observe that $\siom(v_i)\notin\{w(v_{j_1}),w(v_{j_1})+3\}$). 
		
		Suppose that all vertices $\{v_{j_2},\ldots v_{j_d}\}$ are coloured with $\alpha$. If $\alpha +6\neq w(v_{j_1})$, then we add 3 to the weights of two edges from $\{e_2,\ldots,e_d\}$. Since we can choose which edges to reweight, we can maintain the property (5). Since $\siom(v_i)=\alpha+6$, $\siom(u)=\alpha $ or $\alpha+3$ for $u\in \{v_{j_2},\ldots v_{j_d}\}$ and $\siom(v_i)\notin \{w(v_{j_1}),w(v_{j_1})+3\}$, $\om_i$ verifies properties (1)--(5).
		If $\alpha +6= w(v_{j_1})$, then  we add 3 to the weights of four edges. Again, we can choose which edges to reweight, since $d(v_i)\ge 6$. Hence, we are able to maintain the property (5). Similarly as above, we can check that $\om_i$ also verifies the property (3) and we are done.

		
		\medskip
		\noindent {\bf Subcase 2.1.2 $\alpha+3\in\{0,1,2\} \pmod 6$}

		Since $\alpha+3\in\{0,1,2\} \pmod 6$, we have $\alpha\notin\{0,1,2\} \pmod 6$ and, in $\{v_{j_1},\ldots v_{j_d}\}$, there is no vertex with colour $\alpha$ or  $\alpha +6$. 
		
		\medskip
		First, we consider the case when $\siomm(v_{j_1})=\alpha-3$. 
		
		If, in $\{v_{j_2},\ldots v_{j_d}\}$, there is a vertex $v_{j_a}$ with a colour other than $\alpha +3$, then we add 3 to  the weights of  $e_a$ and $e_1$. If the edges incident with $v_i$ are not monochromatic or $d(v_i)\le 5$, then we are done. Thus, suppose that $d(v_i)\ge 6$ and all these edges have the same weight. If there is  another vertex $v_{j_b}$, $b\neq a$, with a colour other than $\alpha +3$, then we  add 3 to the weights of $e_b$ and $e_1$. In the resulting edge-weighting, the edges incident with $v_i$ are not monochromatic. If $v_{j_a}$ is the only vertex with a colour other than $\alpha +3$, \emph{i.e.} all vertices in $\{v_{j_2},\ldots v_{j_d}\}\setminus\{v_{j_a}\}$ have the colour $\alpha +3$, then we choose one edge in $\{e_2,\ldots,e_d\}\setminus \{e_a\}$, say $e_b$, and assign $\om_i(e_1):=\om_{i-1}(e_1)+3,\om_i(e_a):=\om_{i-1}(e_a)+3,\om_i(e_b):=\om_{i-1}(e_b)+3$.  Since we have a choice,  we can maintain the property (5). 
		Now, we have 
		$\siom(v_i)=\alpha+9$,   
		$\siom(v_{b})=\alpha+6$, 
		$\siom(u)=\alpha+3$ for $u\in \{v_{j_2},\ldots v_{j_d}\}\setminus\{v_{j_a},v_{j_b}\}$; 
		so $\om_i$ distinguishes $v_i$ and vertices from $\{v_{j_1},\ldots v_{j_d}\}\setminus\{v_{j_b}\}$. 
		Furthermore, we have 
		$\siom(v_{a})=\siomm(v_{j_a})+3$. 
		As observed before, 
		$\siomm(v_{j_a})\neq \alpha +6$, 
		which implies that 
		$\siom(v_{a})\neq \siom(v_{i})$. 
		For  $v_{j_1}$ we have 
		$w(v_{j_1})=\siomm(v_{j_1})=\alpha-3$; so 
		$\siom(v_i)\notin \{w(v_{j_1}),w(v_{j_1})+3\}$. 
		Thus, the property (3) holds.

		If all vertices in $\{v_{j_2},\ldots v_{j_d}\}$ have colour $\alpha+3$, then  we choose three edges from $\{e_2,\ldots, e_d\}$ for the reweighting, and since we can choose freely, we can construct an edge-weighting $\om_i$ satisfying the property (5). Since
		$\siom(v_i)=\alpha+9$ and 
		$\siom(u)=\alpha+3$ or $\alpha+6$ for $u\in \{v_{j_2},\ldots v_{j_d}\}$, $\om_i$ distinguishes $v_i$ and vertices from $\{v_{j_1},\ldots v_{j_d}\}$.
		Similarly as above, we can see that  $\siom(v_i)\notin \{w(v_{j_1}),w(v_{j_1})+3\}$ and we are done. 
		
		\medskip
		Suppose that $\siomm(v_{j_1})=\alpha+3$. If the edges incident with $v_i$ are not monochromatic or $d(v_i)\le 5$, then the edge-weighting $\om_i:=\om_{i-1}$ satisfies (1)--(5). Thus, we may assume that all edges have the same weight  and $d(v_i)\ge 6$. 
		
		If, in $\{v_{j_2},\ldots v_{j_d}\}$, there are three  vertices $v_{j_a},v_{j_b},v_{j_c}$ with colour other than $\alpha +9$, then we add 3 to the weights of  $e_a,e_b,e_c$ and $e_1$. Thus, the edges incident with $v_i$ are not monochromatic (the property (5) holds) and  $\siom(v_i)=\alpha+12$. Since the colour of $v_{j_a},v_{j_b},v_{j_c}$ is not equal to $\alpha +9$ in $\om_{i-1}$, the colour of $v_{j_a},v_{j_b},v_{j_c}$ is not equal to $\alpha +12$ in $\om_{i}$. Thus, $\om_i$ distinguishes $v_i$ from $v_{j_a},v_{j_b},v_{j_c}$. Since  $\alpha+12\notin\{0,1,2\} \pmod 6$, no vertex in   $\{v_{j_2},\ldots v_{j_d}\}\setminus \{v_{j_a},v_{j_b},v_{j_c}\}$ has colour $\alpha +12$. Furthermore,  $w(v_{j_1})=\siomm(v_{j_1})=\alpha+3$; so $\siom(v_i)\notin \{w(v_{j_1}),w(v_{j_1})+3\}$ and hence  the  resulting edge-weighting satisfies properties (1)--(5).

		Assume that, in $\{v_{j_2},\ldots v_{j_d}\}$, there are at most two vertices with colour other than $\alpha +9$.
		Then, we add 3 to the weights of $e_1$ and edges incident with vertices having colour different from $\alpha +9$.  Next,  from the remaining edges, we choose  two edges if we have two vertices with colours other that  $\alpha +9$, three edges if we have one vertex with colour other that  $\alpha +9$ and four edges if  we have no vertex with colour other that  $\alpha +9$, and add 3 to their weights. Since $d(v_i)\ge 6$, we can choose the edges for  reweighting in such a way that the edges incident with $v_i$ are not monochromatic.   We   obtain $\siom(v_i)=\alpha+15$.  The vertices that had colour $\alpha +9$ in $\om_{i-1}$ have colour either $\alpha +9$ or $\alpha +12$ in $\om_{i}$; so $\om_i$ distinguishes $v_i$ and these vertices. Consider the vertices that had a colour different from $\alpha +9$ in $\om_{i-1}$. We added 3 to the edges incident with these vertices. In $\om_{i-1}$, the colours of these vertices were in $\{0,1,2\} \pmod 6$; so now these vertices have colours that are not in $\{0,1,2\} \pmod 6$, but $\siom(v_i)=\alpha +15\in \{0,1,2\} \pmod 6$. Thus, $\om_i$ distinguishes also these vertices. Furthermore,  $w(v_{j_1})=\siomm(v_{j_1})=\alpha+3$; so $\siom(v_i)\notin \{w(v_{j_1}),w(v_{j_1})+3\}$ and   we are done.

		\medskip
		Finally, suppose that $\siomm(v_{j_1})\notin\{\alpha-3,\alpha+3\}$. Since $\alpha\notin\{0,1,2\} \pmod 6$, there is no vertex with colour $\alpha$ in $\{v_{j_1},\ldots v_{j_d}\}$. If the edges incident with $v_i$ are not monochromatic or $d(v_i)\le 5$, then the edge-weighting satisfies properties (1)--(5). Thus, we may assume that all edges have the same weight  and $d(v_i)\ge 6$. 
		
		If, in $\{v_{j_2},\ldots v_{j_d}\}$, there is a vertex $v_{j_a}$ with colour other than $\alpha +3$, then we add 3 to  weights of  $e_a$ and $e_1$. Thus,  $\siom(v_i)=\alpha+6$ and so $\siom(v_{j_a})=\siomm(v_{j_a})+3\neq \siom(v_i)$, $\siom(v_i)\neq \siom(u)$ for $u\in \{v_{j_2},\ldots v_{j_d}\}\setminus \{v_{j_a}\}$, because $\siom(v_i)\notin\{0,1,2\} \pmod 6$ and $\siom(u)\in\{0,1,2\} \pmod 6$. Consider $v_{j_1}$: we have $\siom(v_i)\neq w(v_{j_1})$, since $ w(v_{j_1})=\siomm(v_{j_1})\in\{0,1,2\} \pmod 6$, and $\siom(v_i)\neq w(v_{j_1})+3$, since $\siomm(v_{j_1})\neq \alpha+3$ by our assumption. Thus, the resulting edge-weighting satisfies properties (1)--(5). 
		
		Thus, we may assume that $\siomm(v_{j_{\ell}})=\alpha+3$ for $\ell\in\{2,\ldots,d\}$. If $\siomm(v_{j_1})\neq\alpha+9$, then we choose three edges from $\{e_2,\ldots,e_d\}$ and add 3 to their weights. Since $d(v_i)\ge 6$, we can choose edges in such a way that we maintain the property (5). Now, we have  $\siom(v_{i})=\alpha+9$; so $\om_i$ distinguishes $v_i$ from $v_{j_{\ell}}$ for $\ell\in\{2,\ldots,d\}$. By our assumption, $\siom(v_{i})\neq w(v_{j_1})=\siomm(v_{j_1})$. Since    $ w(v_{j_1})+3\notin\{0,1,2\} \pmod 6$ and $\siom(v_i)\in\{0,1,2\} \pmod 6$, $\siom(v_{i})\neq w(v_{j_1})+3$ and we are done.
		If $\siomm(v_{j_1})=\alpha+9$, then we choose five edges from $\{e_1,\ldots,e_d\}$ and add 3 to their weights. Since $d(v_i)\ge 6$, we can choose edges in such a way that we maintain the property (5).  Now $\siom(v_{i})=\alpha+15$ and $\siom(u)=\alpha +3$ or $\alpha+6$ for $u\in \{v_{j_2},\ldots v_{j_d}\}$. Thus,   $v_i$ is distinguished from $\{v_{j_2},\ldots v_{j_d}\}$ by $\om_i$. By our assumption, $w(v_{j_1})=\alpha +9$ and so  the property (3ii) also holds.
		
		\medskip 
		\noindent {\bf Subcase 2.2 $d(v_i)=2$}
		
		Thus, $v_i$ has two neighbours $v_{j_1}, v_{j_2}$, and $v_{j_1}$ may have  a successor that follows $v_i$. Since $d(v_i)=2$, the property (5) holds.
		If $\alpha\neq \siomm(v_{j_2})$ and $\alpha \notin \{w(v_{j_1}),w(v_{j_1})+3\}$, then the edge-weighting $\om_i:=\om_{i-1}$ satisfies  properties (1)--(5). Thus, we may assume that either $\alpha=\siomm(v_{j_2})$ or $\alpha \in \{w(v_{j_1}),w(v_{j_1})+3\}$. 
		
		Suppose that $\alpha \in \{w(v_{j_1}),w(v_{j_1})+3\}$. First, assume that  $w(v_{j_1})=\alpha$ (\emph{i.e.} $\siomm(v_{j_1})= \alpha $). Thus, we must have $\alpha\in \{0,1,2\} \pmod 6$ and hence $\alpha+3\notin \{0,1,2\} \pmod 6$ which implies $\siomm(v_{j_2})\neq \alpha+3$. We assign $\om_i(e_1):=\om_{i-1}(e_1)+3$ and $\om_i(e_2):=\om_{i-1}(e_2)+3$ and we are done. Since now 
		$\siom(v_i)=\alpha+6\neq \siom(v_{j_2})$ 
		and $\siom(v_{i})\notin \{w(v_{j_1}),w(v_{j_1})+3\}$. 
		Suppose that $w(v_{j_1})+3=\alpha$. If $\siomm(v_{j_2})= \alpha+3$, then we assign $\om_i(e_2):=\om_{i-1}(e_2)+3$, otherwise, we assign $\om_i(e_1):=\om_{i-1}(e_1)+3$ and $\om_i(e_2):=\om_{i-1}(e_2)+3$. We can check that in both cases the property (3) holds.
		
		Thus, we may assume that  $\alpha \notin \{w(v_{j_1}),w(v_{j_1})+3\}$ and  $\siomm(v_{j_2})=\alpha$. The assumption $\siomm(v_{j_2})=\alpha$ implies  that $\alpha\in \{0,1,2\} \pmod 6$ and hence $\alpha+3\notin \{0,1,2\} \pmod 6$. In this case, we assign  $\om_i(e_1):=\om_{i-1}(e_1)+3$. Thus  $\alpha+3=\siom(v_{i})\neq \siom(v_{j_2})=\alpha$. Furthermore, 
		$\siom(v_{i})\neq w(v_{j_1})$, since 
		$\siom(v_{i})\notin \in\{0,1,2\} \pmod 6$ and 
		$w(v_{j_1})\in\{0,1,2\} \pmod 6$. Also
		$\siom(v_{i})\neq w(v_{j_1})+3$, since by our assumption  $\alpha \neq w(v_{j_1})$.

		\medskip 
		\noindent {\bf Subcase 2.3 $d(v_i)=1$}
		
		Thus $N(v_i)=\{v_{j_1}\}$, $v_{j_1}$ may have a successor that follows $v_i$, and $\siomm(v_i)=\om_{i-1}(v_{j_1}v_i)$.  Since $G\neq K_2$, we have $\siomm(v_i)<\siomm(v_{j_1})$ and so $\siomm(v_i)\notin \{w(v_{j_1}),w(v_{j_1})+3\}$. Thus, the edge-weighting $\om_i:=\om_{i-1}$ verifies properties (1)--(5). 
		
	\end{proof}
	
	
	\begin{lemma}\label{distinguishing}
		Let $\om$ be the edge-weighting given by ALGORITHM. Then $\om$ is a neighbour sum distinguishing $7$-edge-weighting.
	\end{lemma}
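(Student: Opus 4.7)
The plan is to extract a clean invariant from the algorithm's bookkeeping and then verify distinguishability by a short case analysis on each edge. I would proceed in two stages.

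Stage 1 (invariant). I would first prove that $\sigma_\omega(v) \in \{w(v), w(v)+3\}$ for every $v \in V(G)$. By construction $w(v_i) = \sigma_{\omega_i}(v_i)$, and the only later steps that can modify an edge incident to $v_i$ are those steps $j > i$ with $v_j \in N(v_i)$. At each such step, property~(4) pins $\sigma_{\omega_j}(v_i) \in \{w(v_i), w(v_i)+3\}$, so induction along the list of successors of $v_i$ transports the invariant to the final weighting $\omega$. As an immediate corollary, if $v_j \in V''$ then no later step touches an edge at $v_j$, so $\sigma_\omega(v_j) = w(v_j)$ exactly.

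Stage 2 (edge-by-edge). Fix any edge $v_iv_j$ with $i<j$ and split on whether $v_j$ has a successor. If $v_j \in V'$, then $v_j$ is itself a successor of $v_i$, hence $v_i \in V'$ as well, and Observation~\ref{observation} gives $w(v_i), w(v_j) \in \{0,1,2\} \pmod 6$. Property~(3i) at step~$j$ yields $w(v_j) = \sigma_{\omega_j}(v_j) \neq w(v_i)$. Combined with the Stage~1 invariant, if $\sigma_\omega(v_i) = \sigma_\omega(v_j)$ we would need one of $w(v_i)=w(v_j)$, $w(v_i)=w(v_j)+3$, or $w(v_i)+3=w(v_j)$; the first is ruled out by property~(3i), and each of the other two places one side in $\{3,4,5\} \pmod 6$, contradicting Observation~\ref{observation}.

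If instead $v_j \in V''$, then by Stage~1 we have $\sigma_\omega(v_j) = w(v_j)$. By Lemma~\ref{lem:vertex_order}(iii) at most one neighbour $v_{j_1}$ of $v_j$ has a successor following $v_j$. If $v_i = v_{j_1}$, property~(3ii) forces $w(v_j) \notin \{w(v_i), w(v_i)+3\}$, and the Stage~1 invariant $\sigma_\omega(v_i) \in \{w(v_i), w(v_i)+3\}$ closes the case. Otherwise $v_i$ has no successor following $v_j$, so no step after $j$ alters edges at $v_i$; hence $\sigma_\omega(v_i) = \sigma_{\omega_j}(v_i)$, and property~(3ii) delivers $\sigma_{\omega_j}(v_j) \neq \sigma_{\omega_j}(v_i)$ directly.

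The main obstacle is the bookkeeping: ensuring that the property used at step~$j$ transfers unchanged to the final weighting $\omega$, even though intermediate steps can flip incident sums of earlier vertices between $w(v)$ and $w(v)+3$. This is precisely what the Stage~1 invariant and the corollary for $v_j \in V''$ are designed to absorb, so that the distinguishability check reduces to a mechanical comparison of residues modulo~$6$.
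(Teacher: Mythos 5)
Your proof is correct and follows essentially the same route as the paper's: your Stage~1 invariant is exactly the paper's Claim~\ref{claim} (derived from property~(4)), and your Stage~2 is the same edge-by-edge residue analysis modulo~6 using properties~(3i)/(3ii) and Observation~\ref{observation}. The one detail to patch is the edge $v_1v_2$: since Steps~1 and~2 are merged and governed by Table~\ref{table_0} rather than by properties~(1)--(5), property~(3i) is not literally available at step~2, so for that single edge you must instead invoke Observation~\ref{obs1} (which gives $w(v_1)\not\equiv w(v_2)\pmod 6$ with both residues in $\{0,1,2\}$) to reach the same conclusion.
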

	
	\begin{proof}
		It is obvious that $\om$ is a $7$-edge-weighting, since the weight of every edge is in $\{1,\ldots,7\}$. We show that $\om$ is  neighbour sum distinguishing. Let ${\bf v},V',V''$ and $\om_i$ be defined the same as in ALGORITHM. Let $w$ be the vertex-colouring  determined by ALGORITHM. First, observe the following property of every vertex:
		
		\begin{claim}\label{claim}
			\begin{enumerate}[(i)]
				\item If $u\in V'$, then $\sigma_{\om}(u)\in \{w(u),w(u)+3\}$.
				
				\item If $u\in V''$, then $\sigma_{\om}(u)=w(u)$.
			\end{enumerate}
		\end{claim}
		
		\begin{proof}
			
			Let $u=v_i$.
			
			Suppose that $i=1$ or 2. Since $v_1$ and $v_2$ have successors, $v_1,v_2\in V'$. The values  $w(v_1)$ and $w(v_2)$ were assigned at the end of  steps 1 and 2, by $w(v_1)=\sigma_{\om_2}(v_1),w(v_2)=\sigma_{\om_2}(v_2)$. Observe that the weight of $v_1v_2$ will not change in steps $\{3,\ldots,n\}$. ALGORITHM has to respect the property (4); so the weights of the remaining edges incident with either $v_1$ or $v_2$ can be modified only in such a way that $\sigma_{\om_k}(v_1)\in \{w(v_1),w(v_1)+3\}$ and $\sigma_{\om_k}(v_2)\in \{w(v_2),w(v_2)+3\}$ for $k\in\{3,\ldots, n\}$. Thus, finally, $\sigma_{\om}(v_1)\in \{w(v_1),w(v_1)+3\}$ and $\sigma_{\om}(v_2)\in \{w(v_2),w(v_2)+3\}$.

			Suppose that $i\ge 3$. Assume first that $v_i\in V'$.  Let $v_{j_1},v_{j_2},\ldots,v_{j_d}$ be the  predecessors of $v_i$. Observe that the weight which we assigned to $v_{j_{\ell}}v_i$ in the $i$-th step of ALGORITHM will not change in the next steps (\emph{i.e.} $\om_i(v_{j_{\ell}}v_i)=\om(v_{j_{\ell}}v_i)$) and $w(v_i)=\sigma_{\om_i}(v_i)$. ALGORITHM has to verify the property (4); so the weights of edges  incident with the successors of $v_i$ can be modified only in such a way that $\sigma_{\om_k}(v_i)\in \{w(v_i),w(v_i)+3\}$ for $k\in\{i+1,\ldots, n\}$. Thus, finally, $\sigma_{\om}(v_i)\in \{w(v_i),w(v_i)+3\}$.

			Assume now that $v_i\in V''$.  Let $v_{j_1},v_{j_2},\ldots,v_{j_d}$ be the  neighbours of $v_i$. The weight which we assigned to $v_{j_{\ell}}v_i$ in the $i$-th step of ALGORITHM will not change in the next steps (\emph{i.e.} $\om_i(v_{j_{\ell}}v_i)=\om(v_{j_{\ell}}v_i)$). Thus  $\sigma_{\om_i}(v_i)=\sigma_{\om}(v_i)$, which implies that $w(v_i)=\sigma_{\om}(v_i)$. 
		\end{proof}
		
		Let $uw\in E(G)$, we show that $\sigma_{\om}(u)\neq \sigma_{\om}(w)$. Suppose that $uw=v_1v_2$. Steps~1~and~2 imply that $\{w(v_1),w(v_1)+3\}\cap \{w(v_1),w(v_1)+3\}=\emptyset$ and so $\sigma_{\om}(v_1)\neq \sigma_{\om}(v_2)$ by Claim \ref{claim}. Suppose that $u=v_j,w=v_i$ and $j<i\;(i\neq 2)$. We have ($v_i\in V'$ or $v_i\in V''$) and $v_j\in V'$, since $v_j$ has a successor. Suppose that $v_i\in V'$.  By the property (3i), $w(v_i)\neq w(v_j)$, since $w(v_i)=\siom(v_i)$. As we noticed in Observation~\ref{observation}, $w(v_i), w(v_j)\in\{0,1,2\} \pmod 6$, thus  $w(v_i)\neq w(v_j)+3$ and  $w(v_j)\neq w(v_i)+3$, and so $\{w(v_i),w(v_i)+3\}\cap \{w(v_j),w(v_j)+3\}=\emptyset$. Thus, Claim \ref{claim} implies that $\sigma_{\om}(v_i)\neq \sigma_{\om}(v_j)$. Suppose now that  $v_i\in V''$. Since $v_i$ has no successor, the weights of edges incident with $v_i$ will not change in steps $\{i+1,\ldots, n\}$, so $\siom(v_i)=\sigma_{\om}(v_i)$. If $v_j$ has no successor that follows $v_i$, then the weights of edges incident with $v_j$ will also not change in steps $\{i+1,\ldots, n\}$ and so $\siom(v_j)=\sigma_{\om}(v_j)$. By the property (3ii), we have  $\siom(v_i)\neq \siom(v_j)$, thus $\sigma_{\om}(v_i)\neq \sigma_{\om}(v_j)$ if $v_j$ has no successor that follows $v_i$. If $v_j$ has a successor that follows $v_i$, then the property (3ii) implies that $\siom(v_i)\notin \{w(v_j),w(v_j)+3\}$. As we observed, $\siom(v_i)=\sigma_{\om}(v_i)$ and then $\sigma_{\om}(v_i)\neq \sigma_{\om}(v_j)$ by Claim \ref{claim}. 
	\end{proof}

	\begin{lemma}\label{relaxed}
		Let $\om$ be the edge-weighting given by ALGORITHM. Then, for every vertex $u$ of degree at least $6$, there are edges $e'$ and $e''$ incident with $u$ satisfying $\om(e')\neq \om(e'')$.
	\end{lemma}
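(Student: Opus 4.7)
The plan is to combine property~(5) of the algorithm with the sum-invariant property~(4) to show that, for every vertex $u$ of degree at least~$6$, the algorithm leaves two incident edges of distinct weights in the final weighting $\om$. I split the argument by the role of $u$ in the ordering $\mathbf{v}$: the cases $u = v_i \in V''$, $u = v_i \in V'$ with $i \geq 3$, and $u \in \{v_1,v_2\}$.

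The case $u = v_i \in V''$ is immediate: all edges at $u$ are predecessor edges, Subcase~2.1 of the proof of Lemma~\ref{executable} explicitly arranges that they are not monochromatic when $d(v_i) \geq 6$, and none of them are ever modified after step~$i$ since $v_i$ has no successor. The main case is $u = v_i \in V'$ with $i \geq 3$; let $v_{k_1}$ be the first successor of $v_i$. Property~(5) gives either (A) the predecessor edges of $v_i$ are not monochromatic at step~$i$---and since predecessor edges are not modified after step~$i$, we are done---or (B) they share a common weight $c$ with $c \notin \{\om_i(v_iv_{k_1}),\, \om_i(v_iv_{k_1})+3\}$. The crucial observation for~(B) is that between steps~$i$ and~$k_1 - 1$ no edge incident with $v_i$ is modified: otherwise some $v_j$ with $i < j < k_1$ would have to be a successor of $v_i$, contradicting the minimality of $k_1$. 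Hence $\sigma_{\om_{k_1-1}}(v_i) = \sigma_{\om_i}(v_i) = w(v_i)$, so the provisional ``$-3$'' in step~$k_1$ is not triggered on the edge $v_iv_{k_1}$, and the only possible modification there is $+3$. It follows that $\om(v_iv_{k_1}) \in \{\om_i(v_iv_{k_1}),\, \om_i(v_iv_{k_1})+3\}$, which by property~(5) excludes $c$, so $v_i$ has two incident edges of different weights.

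Finally, for $u \in \{v_1, v_2\}$ of degree at least~$6$, the edges $v_1v_2$, $e_1$ and $e_2$ are initialised via Table~\ref{table_0}. A direct inspection across all nine columns shows that $\om_2(v_1v_2) \notin \{\om_2(e_1),\, \om_2(e_1)+3\}$, and similarly with $e_2$ in place of $e_1$. Since $v_1v_2$ is never modified afterwards, and $e_1$ (respectively $e_2$) is modified only at the step processing its second endpoint, where the same ``first successor'' argument as in~(B) shows that its weight can only gain~$3$, the final weights $\om(v_1v_2)$ and $\om(e_1)$ (respectively $\om(e_2)$) remain distinct. The main obstacle throughout is the asymmetric change set of the edge $v_iv_{k_1}$: a priori a decrease by~$3$ would be possible, but the ``first successor'' argument rules this out, which is exactly why property~(5) is stated with $\{\om_i(v_iv_{k_1}),\, \om_i(v_iv_{k_1})+3\}$ and not with the symmetric three-element set.
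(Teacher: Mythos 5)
Your proposal is correct and follows essentially the same route as the paper's proof: the same three-way case split ($u\in V''$, $u=v_i\in V'$ with $i\ge 3$, and $u\in\{v_1,v_2\}$), the same use of property~(5) combined with the ``first successor'' observation that property~(4) permits the weight of $v_iv_{k_1}$ (resp.\ $e_1,e_2$) to change only by $+3$, and the same inspection of Table~\ref{table_0}. Your explicit justification that no edge incident with $v_i$ is touched between steps $i$ and $k_1-1$ is a slightly more careful rendering of a point the paper states more briefly, but it is not a different argument.
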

	
	\begin{proof}
		Let ${\bf v},V',V''$ and $\om_i$ be defined the same as in ALGORITHM. Let $w$ be the vertex-colouring  determined by ALGORITHM. 
		First, we prove that the lemma is true for $v_1$ and $v_2$. 
		Let $v_i$ be the first successor of $v_1$ different from $v_2$. Let $v_j$ be  the first successor of $v_2$. Let $e_1=v_1v_i,e_2=v_2v_j$.
		Steps~1~and~2 of ALGORITHM imply that $\om_2(v_1v_2)\notin \{\om_2(e_1),\om_2(e_1)+3\},\om_2(v_1v_2)\notin \{\om_2(e_2),\om_2(e_2)+3\}$. Observe that the weight of $v_1v_2$ will not change in steps $\{3,\ldots,n\}$, \emph{i.e.} $\om(v_1v_2)=\om_2(v_1v_2)$. When $v_i\;(v_j)$ is treated, then $\sigma_{\om_{i}}(v_1)=\sigma_{\om_{2}}(v_1)=w(v_1)\;(\sigma_{\om_{j}}(v_2)=\sigma_{\om_{2}}(v_2)=w(v_2))$, because $v_i\;(v_j)$ is the first successor. Thus, the weight of $e_1\;(e_2)$ can be modified only by adding $3$, because the property (4) must hold. So $\om(e_1)\in \{\om_2(e_1),\om_2(e_1)+3\}\;(\om(e_2)\in \{\om_2(e_2),\om_2(e_2)+3\})$. Thus, the argument that  $\om_2(v_1v_2)\notin \{\om_2(e_1),\om_2(e_1)+3\},\om_2(v_1v_2)\notin \{\om_2(e_2),\om_2(e_2)+3\}$ implies that $\om(v_1v_2)\neq \om(e_1)$ and $\om(v_1v_2)\neq \om(e_2)$, so we are done.
		
		Suppose that $u\in V'$ and $u\notin \{v_1,v_2\}$. Assume that $u=v_i$ and $v_k$ is the first successor of $v_i$. By the property (5) of ALGORITHM, the edges $\{v_jv_i:j< i,\; v_j\in N(v_i)\}$ are not monochromatic or the weight of the edges $\{v_jv_i:j< i,\; v_j\in N(v_i)\}$ is not in $\{\om_i(v_iv_{k}),\om_i(v_iv_{k})+3\}$. If the edges $\{v_jv_i:j< i,\; v_j\in N(v_i)\}$ are not monochromatic, then there are two edges $v_{j'}v_i$ and $v_{j''}v_i$ such that $\om(v_{j'})\neq \om(v_{j''})$ and we are done. Otherwise, observe that the weight of $v_iv_k$ can be modified only if $v_k$ is being treated. When $v_k$ is being treated, then $\sigma_{\om_{k-1}}(v_i)=\sigma_{\om_i}(v_i)=w(v_i)$, because $v_k$ is the first successor of $v_i$. Since ALGORITHM restricts the property (4), the weight of $v_iv_k$ can be modified only by adding 3. Thus, $\om(v_iv_k)$ is different from the weights of the edges incident with the predecessors of $v_i$.
		
		Suppose that $u\in V''$. Since $v_i$ has no successor and the property (5) of ALGORITHM must hold, the edges $\{v_jv_i:j< i,\; v_j\in N(v_i)\}$ are not monochromatic. Thus, there are two edges $v_{j'}v_i$ and $v_{j''}v_i$ such that $\om(v_{j'})\neq \om(v_{j''})$ and we are done. 
	\end{proof}

	\begin{proof}[of Theorem \ref{thm:seven_colours}]
		We may assume that $G$ is connected, since otherwise the theorem holds by induction on each component. The theorem is obviously true if $G=K_{1,n-1}$. Thus, we  assume that $G\neq K_{1,n-1}$. By Lemma \ref{lem:vertex_order}, there is an ordering ${\bf v}=(v_1,v_2,\ldots ,v_n)$ of vertices of $G$ that satisfies  conditions (i)--(iii). Thus, we can apply ALGORITHM on $G$. Let $\om$ be  the edge-weighting $\om$ given by  ALGORITHM. By Lemmas \ref{distinguishing} and \ref{relaxed},  $\om$   is a neighbour sum distinguishing $7$-edge-weighting and all the vertices of degree at least $6$ are incident with at least two edges of different weights, which proves the theorem. 
	\end{proof}
	
	
	\section{Bipartite graphs}
	\label{sec:bipartite}
	
	In this section, we show that every nice bipartite graph has a $6$-edge-weighting which distinguishes adjacent vertices and in which every vertex of degree at least 2 is incident with at least two edges of different weights. In order to prove this result, we apply  a result  obtained by Karo\'nski et al. in \cite{KaLu04}. They considered edge-weightings with elements of a group and proved the following theorem:
	
	\begin{theorem}{\rm \cite{KaLu04}}
		\label{thm:group}
		Let $\Gamma$ be a finite abelian group of odd order and let $G$ be a non-trivial $|\Gamma|$-colourable graph. Then, there is an edge-weighting of $G$ with  elements of $\Gamma$ such that the resulting vertex-colouring is  proper. 
	\end{theorem}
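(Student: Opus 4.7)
We may assume $G$ is connected (otherwise treat each non-trivial component separately; isolated vertices contribute nothing). By hypothesis there is a proper vertex-colouring $c:V(G)\to\Gamma$ using the elements of $\Gamma$ as colours. The strategy is to build an edge-weighting $\omega:E(G)\to\Gamma$ realising $\sigma_\omega(v)=c(v)$ for every $v\in V(G)$; since $c$ is a proper colouring, the induced $\sigma_\omega$ is then automatically proper.

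\textbf{Image of the sum map.} The assignment $\omega\mapsto\sigma_\omega$ is an abelian group homomorphism $\phi:\Gamma^{E(G)}\to\Gamma^{V(G)}$, so the task reduces to showing $c\in\mathrm{Im}(\phi)$. A direct analysis shows that, for connected $G$, $\mathrm{Im}(\phi)=\Gamma^{V(G)}$ when $G$ contains an odd cycle, whereas $\mathrm{Im}(\phi)=H_{A,B}:=\{t:\sum_{v\in A}t(v)=\sum_{v\in B}t(v)\}$ when $G$ is bipartite with parts $A,B$. The inclusion $\mathrm{Im}(\phi)\subseteq H_{A,B}$ is immediate because each edge of a bipartite graph contributes to exactly one vertex in each part. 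The hypothesis that $|\Gamma|$ is odd enters through the invertibility of $2$ in $\Gamma$, which is precisely what is needed to exploit an odd cycle.

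\textbf{Non-bipartite case.} Take a spanning tree $T$ rooted at a vertex $r$ lying on an odd cycle $C=ru_1\ldots u_{2k}r$ of $G$, and assign arbitrary weights to the edges outside $T\cup E(C)$. Process $T$ from leaves towards $r$: at each non-root vertex $v$, set the weight of the edge joining $v$ to its parent so as to force $\sigma_\omega(v)=c(v)$. Once the tree is treated, $\sigma_\omega(r)$ is determined; write $\delta=c(r)-\sigma_\omega(r)$. Now shift the weights along $C$ by values $\delta_0,\delta_1,\ldots,\delta_{2k}$, which changes $\sigma_\omega(u_i)$ by $\delta_{i-1}+\delta_i$ (indices mod $2k+1$, with $u_0=r$). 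Requiring no change away from $r$ forces $\delta_i=(-1)^i\delta_0$, so the correction at $r$ is $2\delta_0=\delta$. Because $|\Gamma|$ has odd order, $2$ is invertible, so $\delta_0=\delta/2$ solves the system and the adjusted weighting satisfies $\sigma_\omega=c$.

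\textbf{Bipartite case (main obstacle).} Here the leaf-to-root construction alone produces $\omega$ with $\sigma_\omega=c$ provided $c\in H_{A,B}$, so the work is to start from a proper colouring already satisfying $\sum_A c=\sum_B c$. Beginning with any proper colouring, we adjust by changing the value of $c$ at one or more vertices of $A$: changing $c(v)$ by $\gamma$ for $v\in A$ shifts the discrepancy $\sum_A c-\sum_B c$ by $\gamma$ and keeps $c$ proper as long as $c(v)+\gamma$ avoids the colours of $v$'s neighbours. I expect this to be the delicate step: one must argue that combining such changes across possibly several vertices can drive the discrepancy to zero without violating properness. This relies on the fact that each vertex admits at least $|\Gamma|-\Delta(G)$ allowed shifts, together with a Cauchy--Davenport style argument in $\Gamma$ (facilitated again by $|\Gamma|$ being odd) to show that Minkowski sums of the allowed shift sets eventually cover all of $\Gamma$. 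Once $c$ lies in $H_{A,B}$, the tree procedure concludes the proof.
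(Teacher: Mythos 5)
Your non-bipartite argument (spanning tree processed from the leaves to a root chosen on an odd cycle, then a correction propagated around the cycle using the invertibility of $2$ in a group of odd order) is sound, up to the small slip that the edges of $E(C)\setminus E(T)$ must also receive initial weights before the tree is processed, otherwise $\sigma_\omega$ is undefined at their endpoints. The genuine gap is in the bipartite case --- which is the only case the paper actually proves: Theorem~\ref{thm:group} is cited from \cite{KaLu04}, and the paper supplies a proof only of its $\Gamma=\mathbb{Z}_3$ bipartite specialisation, Theorem~\ref{thm:bipartite-proper}. Your plan there is to repair an arbitrary proper colouring $c$ until $\sum_A c=\sum_B c$ by recolouring vertices, invoking the bound of at least $|\Gamma|-\Delta(G)$ legal shifts per vertex and a Cauchy--Davenport covering argument. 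Both ingredients fail. The hypothesis controls the chromatic number of $G$, not its maximum degree, so $|\Gamma|-\Delta(G)$ can be zero or negative (already for $K_{3,3}$ with $|\Gamma|=3$), and a vertex whose neighbourhood realises every colour of $\Gamma\setminus\{c(v)\}$ admits no nonzero shift at all. Moreover Cauchy--Davenport is a theorem about $\mathbb{Z}_p$; in a general abelian group of odd order, Minkowski sums of sets containing $0$ can stall inside a proper subgroup (this is the content of Kneser's theorem), so ``eventually cover all of $\Gamma$'' does not follow.

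The paper's proof of the special case sidesteps the repair problem by constructing the target colouring rather than being handed one. It fixes a vertex $x\in V_1$ of degree at least $2$, starts from the constant weighting, and adds $1$ and $2$ alternately along a path from each $v\in V_1\setminus\{x\}$ to $x$; this leaves every vertex of $V_1\setminus\{x\}$ at residue $1\pmod 3$, every vertex of $V_2$ at residue $0\pmod 3$, and dumps the entire discrepancy onto $x$. If $x$ then sits on the forbidden residue, two edges at $x$ are reweighted, moving $x$ to residue $1$ and two of its $V_2$-neighbours to residue $2$, which still separates the two sides. The structural point your sketch misses is that one designated vertex of degree at least $2$ (or, failing that, two of its neighbours) suffices to absorb the imbalance, because one only needs the residues of the two sides to stay disjoint, not a prescribed colouring to be realised exactly. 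Replacing your Cauchy--Davenport step by such a local adjustment at a single vertex is what would close the argument.
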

	
	Theorem \ref{thm:group} implies that if $k$ is odd and $G$ is   non-trivially $k$-vertex colourable, then $G$ admits a neighbour sum  distinguishing $k$-edge-weighting. Furthermore, the proof of Theorem \ref{thm:group} implies that if $U_1,\ldots, U_k,\;|U_i| >0,\;1\le i \le k$ are colour classes of $G$, then there is  a neighbour sum  distinguishing $k$-edge-weighting  $\om$  such that $\sigma_{\om}(v_i)= i \pmod k$ for every  $v_i\in U_i\;(1\le i \le k)$. For the purpose of further theorem, we need this property. However, we will use it only for bipartite graphs and for the edge-weighting with $\{1,2,3\}$. Thus, we restate Theorem \ref{thm:group}  and the proof for such a special case.

	\begin{theorem}{\rm \cite{KaLu04}}
		\label{thm:bipartite-proper}
		Let $G$ be a connected bipartite graph on at least three vertices with the vertex partition $(V_1,V_2)$. Then, $G$ admits a  neighbour sum  distinguishing $3$-edge-weighting. Moreover, there is  a neighbour sum  distinguishing $3$-edge-weighting $\om$ of $G$ such that $\sigma_{\om}(v_1)\neq \sigma_{\om}(v_2) \pmod 3$ for every  $v_1\in V_1$ and $v_2\in V_2$.
	\end{theorem}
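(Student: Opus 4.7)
The plan is to work modulo $3$. Since replacing any weight $0$ by $3$ increases each vertex sum by a multiple of $3$, every $\omega\colon E(G)\to\mathbb{Z}_3$ lifts to an edge-weighting $\widetilde{\omega}\colon E(G)\to\{1,2,3\}$ without altering residues mod $3$. Thus it suffices to construct $\omega\colon E(G)\to\mathbb{Z}_3$ such that the induced target $t(v):=\sigma_{\omega}(v)\bmod 3$ satisfies $t(V_1)\cap t(V_2)=\emptyset$ as subsets of $\mathbb{Z}_3$; then the neighbour sum distinguishing property is automatic because every edge of $G$ crosses the bipartition.

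The first step is to characterise which targets are realisable. Viewing the incidence relation as a linear map $A\colon\mathbb{Z}_3^{E(G)}\to\mathbb{Z}_3^{V(G)}$, a vector $f$ lies in $\ker A^{\top}$ exactly when $f(u)+f(v)\equiv 0\pmod 3$ on every edge $uv$. For a connected bipartite graph this kernel is one-dimensional, spanned by $\mathbf{1}_{V_1}-\mathbf{1}_{V_2}$, hence
\[
\operatorname{Im}(A)=\Bigl\{t\in\mathbb{Z}_3^{V(G)}:\ \textstyle\sum_{v\in V_1}t(v)\equiv\sum_{v\in V_2}t(v)\pmod 3\Bigr\}.
\]

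The second step is to exhibit a target $t\in\operatorname{Im}(A)$ with $t(V_1)\cap t(V_2)=\emptyset$, by case analysis on $(|V_1|,|V_2|)\bmod 3$. When at least one of the two cardinalities is $\equiv 0$, or when $|V_1|\not\equiv|V_2|\pmod 3$, one can directly find $a\neq b\in\mathbb{Z}_3$ with $a|V_1|\equiv b|V_2|\pmod 3$ and set $t|_{V_1}\equiv a$, $t|_{V_2}\equiv b$. The only remaining case is $|V_1|\equiv|V_2|\not\equiv 0\pmod 3$, in which any constant-per-part target forces $a=b$. Here I invoke $n\ge 3$: one of the parts (possibly after swapping $V_1$ and $V_2$) has at least two vertices, so I set $t\equiv 0$ on the opposite part and choose $t\colon V_i\to\{1,2\}$ on the larger side with $\sum_{V_i}t\equiv 0\pmod 3$. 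Such an assignment always exists because toggling $t(v)$ between $1$ and $2$ on any single vertex shifts the total sum by $\pm 1\pmod 3$, so every residue is attainable.

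The main obstacle is this final sub-case. One must verify that the swapping trick handles the star $K_{1,m}$ with $m\equiv 1\pmod 3$ (where one part has size exactly $1$) and that the hypothesis $n\ge 3$ rules out the sole genuine obstruction $|V_1|=|V_2|=1$. Once a suitable $t$ is fixed, Step~1 yields some $\omega\in A^{-1}(t)$; the lift $\widetilde{\omega}\colon E(G)\to\{1,2,3\}$ then satisfies $\sigma_{\widetilde{\omega}}\equiv t\pmod 3$, and the disjointness of $t(V_1)$ and $t(V_2)$ delivers the second conclusion of the theorem, which a fortiori implies the first.
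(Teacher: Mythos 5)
Your proposal is correct, but it reaches the statement by a genuinely different route from the paper. The paper reproduces the original Karo\'nski--{\L}uczak--Thomason path-pushing argument: start from the all-$3$ weighting, and for each $v\in V_1\setminus\{x\}$ add $1$ and $2$ alternately along a $v$--$x$ path, which sets $\sigma(v)\equiv 1\pmod 3$, leaves every intermediate vertex unchanged, and accumulates all the error at a fixed root $x$ of degree at least $2$; a final adjustment of two edges at $x$ repairs the root if its colour lands on $0\pmod 3$. You instead treat the map $A\colon\mathbb{Z}_3^{E(G)}\to\mathbb{Z}_3^{V(G)}$ linear-algebraically, identify $\operatorname{Im}(A)$ as the orthogonal complement of $\ker A^{\top}=\langle \mathbf{1}_{V_1}-\mathbf{1}_{V_2}\rangle$ (so the realizable residue targets are exactly those with equal part-sums modulo $3$), and then solve the residual arithmetic problem of producing such a target with $t(V_1)\cap t(V_2)=\emptyset$; your case analysis is complete, the only delicate case being $|V_1|\equiv|V_2|\not\equiv 0\pmod 3$, which you correctly resolve via $n\ge 3$ by making $t$ non-constant on a part of size at least $2$ (the attainable sums then cover three consecutive integers, hence every residue). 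What your approach buys is a full characterization of achievable targets and a transparent account of where the hypothesis $n\ge 3$ enters ($K_2$ being the only obstruction); what it gives up is constructiveness --- your preimage under $A$ exists by a rank argument rather than by an explicit procedure, although Gaussian elimination or the paper's path argument makes it effective. Note also that your proof yields targets that are constant on each part (or constant on one part and $\{1,2\}$-valued on the other), whereas the paper's construction outputs the specific pattern $V_1\equiv 1$ and $V_2\subseteq\{0,2\}\pmod 3$; both satisfy the disjointness required by the theorem and by its application in Theorem~\ref{thm:bipartite-relaxed}.
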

	\begin{proof}
		Let $x\in V(G)$ and  $d(x)\ge 2$. Without loss of generality, assume that $x\in V_1$. Let $e_1=xv'_2,e_2=xv''_2$.  We start with the weight 3 on all edges, so $\sum_{e\in E(G)}\om(e)= 0 \pmod 3$. We now try to modify the weights of edges, maintaining the sum of edge weights congruent to 0 $\pmod 3$, until all vertices of $V_1\setminus \{x\}$ have colours congruent to 1 $\pmod 3$. To do that, for each vertex $v$ of $V_1\setminus \{x\}$, we consider a path from $v$ to $x$ and add alternately $1$ and $2$ to the values of the edges along this path. After such an operation, the colour of $v$ is $1\pmod 3$, the colour of $x$ is changed, and all the colours of the other vertices are unchanged. Now, the only vertex of $V_1$ which may have a colour different from 1 $\pmod 3$ is $x$, and all the vertices of $V_2$ still have a colour congruent to 0 $\pmod 3$. If the colour of $x$ is not congruent to 0 $\pmod 3$, we are done; if not, we can finish by reweighting edge $e_1$ on $c_1$ and $e_2$ on $c_2$, where $c_1,c_2\in\{1,2,3\}$ and  $c_1= \om(e_1)+2, c_2= \om(e_2)+2 \pmod 3$. Finally, we obtain the desired edge-weighting $\om$, because 
		\begin{itemize}
			\item either $\sigma_{\om}(v_1)= 1 \pmod 3$ for $v_1\in V_1$ and  $\sigma_{\om}(v_2)= 0$ for $v_2\in V_2\setminus\{v'_2,v''_2\}$, $\sigma_{\om}(v'_2)\in \{0,2\} \pmod 3$ and $\sigma_{\om}(v''_2)\in \{0,2\} \pmod 3$,
			\item or $\sigma_{\om}(v_1)= 1 \pmod 3$ for $v_1\in V_1\setminus \{x\}$, $\sigma_{\om}(x)= 2 \pmod 3$ and $\sigma_{\om}(v_2)= 0$ for $v_2\in V_2$.
		\end{itemize} 
	\end{proof}

	We can apply Theorem \ref{thm:bipartite-proper} for  our version of the neighbour sum  distinguishing edge-weighting. To prove the main result of this section, we need also the following lemma:
	
	\begin{lemma}\label{lem:colouring}
		If $G$ is  bipartite, then there is a $2$-edge-weighting of $G$ such that every vertex of degree at least 2 is incident with two edges with different weights. 
	\end{lemma}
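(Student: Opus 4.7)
I would prove the lemma by induction on $|E(G)|$, taking $|E(G)| \le 1$ as a trivial base case. Since distinct components impose independent constraints, it suffices to treat a connected bipartite $G$ with $|E(G)| \ge 2$, and then split according to the minimum degree of~$G$.

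If $G$ has a leaf $v$ with (unique) neighbour~$u$, I would apply induction to $G - v$ and extend the resulting $2$-edge-weighting $\omega$ to the edge $uv$ as follows: if $d_{G-v}(u) \ge 2$, then $u$ already sees both weights and $uv$ can be weighted arbitrarily; if $d_{G-v}(u) = 1$, then $d_G(u) = 2$ and I would weight $uv$ with the weight opposite to $u$'s unique edge in $G - v$; if $d_{G-v}(u) = 0$, then $G \cong K_2$ and no constraint needs to be met. If instead $G$ has minimum degree at least~$2$, I would distinguish two subcases. When $G$ is $2$-regular, being connected and bipartite forces $G$ to be an even cycle, which I would weight by alternating $1$ and $2$. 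Otherwise, $G$ has some vertex $u$ with $d_G(u) \ge 3$; I would pick any edge $e = uv$ incident to $u$, apply induction to $G - e$ (which remains bipartite, though possibly disconnected), and extend to $e$: vertex $u$ is automatically fine because $d_{G-e}(u) \ge 2$ guarantees it sees both weights in $G - e$ by induction, and for $v$, if $d_G(v) \ge 3$ it is likewise fine, while if $d_G(v) = 2$ I would weight $e$ with the weight opposite to $v$'s unique remaining edge in $G - e$.

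The main obstacle is the configuration where both endpoints of the removed edge have degree exactly~$2$: removing an arbitrary edge $uv$ can then force two conflicting constraints on $\omega(e)$, one from each endpoint, and a naive induction fails. The key idea that avoids this is to choose $e$ incident to a vertex $u$ with $d_G(u) \ge 3$, which ``pre-satisfies'' $u$'s constraint and leaves at most one linear condition on $\omega(e)$. The only way the minimum-degree-$\ge 2$ case can evade containing a vertex of degree $\ge 3$ is when $G$ is $2$-regular, and this is precisely where bipartiteness enters: it ensures the resulting cycle is even and thus admits the alternating weighting.
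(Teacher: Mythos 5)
Your proof is correct, but it takes a genuinely different route from the paper's. The paper inducts on the number of vertices and always deletes a vertex $v$ of \emph{minimum} degree; the difficult configuration there is $d_G(v)=2$ with both neighbours having degree~$1$ in $G-v$ and identically weighted pendant edges, which the paper repairs by rerouting weights along an alternating path until it reaches a vertex already incident with two equally weighted edges --- bipartiteness is invoked precisely to guarantee this path cannot close up into an odd cycle and therefore terminates. You instead induct on the number of edges and, in the minimum-degree-at-least-$2$ case, delete a single edge chosen to be incident with a vertex of degree at least~$3$; this ``pre-satisfies'' one endpoint's constraint and leaves at most one forbidden weight (coming from the other endpoint when it has degree~$2$), so no repair step is ever needed. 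The price is an extra base case, the connected $2$-regular graphs, and that is where bipartiteness enters your argument (the cycle is even, so alternating $1,2$ works); note that this base case is unavoidable, since the lemma genuinely fails for odd cycles. Your approach buys a shorter, repair-free argument that isolates the use of bipartiteness in a single transparent spot; the paper's approach is heavier but its alternating-path technique is self-contained within the vertex-deletion framework used throughout the rest of the paper. Both arguments are complete.
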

	
	\begin{proof}
		We proceed by induction on the number of vertices. The lemma is true for bipartite graphs with one or two vertices. Assume that the lemma is true for every bipartite graph with less than $n$ vertices. Let $G$ be a bipartite graph with $n\ge 3$. If $G$ is not connected, then by induction there is a $2$-edge-weighting of every component of $G$ such that every vertex of degree at least 2 is incident with at least two edges of different weights and we are done. Assume that $G$ is connected and $v$ is a vertex of minimum degree.  Let $G'=G-v$ and $\om$ be a $2$-edge-weighting of $G'$ such that every vertex of degree at least 2 is incident with at least two edges weighted differently. 
		We extend $\om$ to all the edges of $G$. 
		
		\medskip
		First, assume that  $d_G(v)=1$, then there are two possibilities. Let $u$ be the neighbour of $v$. If $d_{G'}(u) \geq 2$, then, by induction hypothesis, $u$ is already incident with two edges weighted differently, hence we can label the edge $uv$ with any weight. Otherwise, we weight the edge $uv$ with the weight not used by the edge incident with $u$ in $G'$.
		
		\medskip
		Now, assume that $d_G(v)= 2$ and let $N(v)=\{u,w\}$.
		Suppose first that $v$ has a neighbour of degree at least 2 in $G'$, say $d_{G'}(u)\ge 2$. In this case, the edge $uv$ can be weighted with either 1 or 2, because  the vertex $u$ is already incident with two edges weighted differently in $G'$. So, we first weight the edge $vw$ in such a way that $w$ is incident with two edges of different weights, and next we weight $vu$ with the weight different from $\om(vw)$.

		Thus, we may assume that  $d_{G'}(u)=1$ and $d_{G'}(w)=1$. Observe  that if  $\om(uu_1)\neq \om(ww_1)$ (where $u_1,w_1$ is the neighbour in $G'$ of $u,w$, respectively), then we can extend the weighting on all the edges of $G$.  In such a case  we weight $vu$ with the weight $\om(ww_1)$ and $vw$ with the weight $\om(uu_1)$.  
		
		Thus, we may assume that $\om(uu_1)= \om(ww_1)$, say without loss of generality  $\om(uu_1)=\om(ww_1)=1$. We reweight some edges of $G'$.  If $u_1$  is incident with at least two edges weighted with 1, then we reweight the edge $uu_1$ with  2. In the new weighting of $G'$, every vertex of degree at least 2 is incident with at least two edges weighted differently and there are two neighbours of $v$ having  incident edges weighted differently; so as observed above we can extend the weighting to the desired edge-weighting of $G$.   Suppose that  $uu_1$ is the only edge incident with $u_1$  with weight  1, the remaining edges having weight 2. Let $u_2\in N(u_1)\setminus \{u\}$. If $u_2$ is incident with at least two edges weighted with 2, then we reweight the edge $u_1u_2$ with 1 and the edge $uu_1$ with  2. We obtain an edge-weighting of $G'$ in which every vertex of degree at least 2 is incident with at least two edges weighted differently and there are two neighbours of $v$ having  incident edges weighted differently, so we  are done. Otherwise, we repeat this reweighting process. Suppose that, after $k$ steps, we obtain a reweighted path $P=u_0,u_1,u_2,\ldots, u_k\;(u=u_0)$. Let $u_{k+1}\in N(u_k)\setminus \{u_{k-1}\}$. Since every  vertex $u_i\;(i\in\{1,\ldots, k-1\})$ is incident with exactly one edge weighted with $\om(u_{i-1}u_i)$ and there is no odd cycle in $G$, we have $u_{k+1}\notin V(P)\setminus \{u_0\}$ and $u_{k+1}\neq w$. Furthermore, $u_{k+1}\neq u$ because $d_{G'}(u)=1$. Thus, the reweighting process eventually ends, and we obtain an alternating path $P=u_0,u_1,u_2,\ldots, u_t$ ($u_0=u$). Every  vertex $u_i\;(i\in\{1,\ldots, t\})$ of $P$ has degree at least 2 in $G'$ and is incident with exactly one edge weighted with $\om(u_{i-1}u_i)$, while $u_t$ is incident with at least two edges weighted with $\om(u_{t-1}u_t)$. We can swap the weight of the edges of $P$, keeping an alternating path and obtaining a $2$-edge-weighting of $G'$ in which  every vertex of degree at least 2 is incident with at least two edges weighted differently and where two neighbours of $v$ have  incident edges weighted differently; so we can extend the weighting on the edges incident with $v$ in such a way that we  obtain the desired edge-weighting.
		
		\medskip
		Finally, assume that $d_G(v)> 2$. Since $v$ is a vertex of minimum degree, each neighbour of $v$ has degree at least 2 in $G'$. Thus, every neighbour of $v$ is incident with two edges weighted differently in $G'$. Hence, we can weight every edge incident with $v$ with either colour 1 or 2, ensuring that the edges incident with $v$ are not monochromatic. 
	\end{proof}
	
	\begin{theorem}
		\label{thm:bipartite-relaxed}
		Let $G$ be a nice bipartite graph. Then, there is a neighbour sum  distinguishing $6$-edge-weighting such that every vertex of degree at least 2 is incident with at least two edges with different weights.
	\end{theorem}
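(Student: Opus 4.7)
The plan is to superpose a neighbour sum distinguishing $3$-edge-weighting provided by Theorem~\ref{thm:bipartite-proper} with a $2$-edge-weighting provided by Lemma~\ref{lem:colouring}, obtaining a $6$-edge-weighting that combines the benefits of both. First I would reduce to the case of a single connected component: since $G$ is nice, every component is either an isolated vertex (which is trivial) or a connected bipartite graph on at least three vertices, and only the latter requires work.

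Fix such a component $H$ with bipartition $(V_1,V_2)$. By Theorem~\ref{thm:bipartite-proper}, there exists a neighbour sum distinguishing $3$-edge-weighting $\omega_1 \colon E(H) \to \{1,2,3\}$ with the additional property that $\sigma_{\omega_1}(v_1) \not\equiv \sigma_{\omega_1}(v_2) \pmod{3}$ for every $v_1 \in V_1$ and $v_2 \in V_2$. By Lemma~\ref{lem:colouring}, there also exists a $2$-edge-weighting $\omega_2 \colon E(H) \to \{1,2\}$ such that every vertex of degree at least $2$ is incident with edges of both weights. I would then define
\[
\omega(e) := \omega_1(e) + 3\bigl(\omega_2(e)-1\bigr)\quad\text{for every } e\in E(H),
\]
which gives $\omega(e) \in \{1,2,3\}$ when $\omega_2(e) = 1$ and $\omega(e) \in \{4,5,6\}$ when $\omega_2(e) = 2$; in particular $\omega$ is a $6$-edge-weighting.

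The two verifications are short. For any vertex $v$ of $H$,
\[
\sigma_{\omega}(v) = \sigma_{\omega_1}(v) + 3\bigl(\sigma_{\omega_2}(v) - d_H(v)\bigr) \equiv \sigma_{\omega_1}(v) \pmod{3}.
\]
Every edge of $H$ joins $V_1$ with $V_2$, so by the property of $\omega_1$ its endpoints have different sums modulo $3$ under $\omega$; hence $\omega$ is neighbour sum distinguishing. Moreover, if $v$ has degree at least $2$, the two incident edges of distinct $\omega_2$-weights given by Lemma~\ref{lem:colouring} have $\omega$-weights in the disjoint sets $\{1,2,3\}$ and $\{4,5,6\}$, so they have different $\omega$-weights, as required.

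I do not foresee a substantive obstacle; the only thing to check is that every component is eligible for Theorem~\ref{thm:bipartite-proper}. This follows from $G$ being nice: no component is $K_2$, and any connected bipartite component carrying at least one edge therefore has at least three vertices. Assembling the weightings obtained on each component yields the desired $6$-edge-weighting of $G$.
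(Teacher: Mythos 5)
Your proof is correct and rests on the same two ingredients as the paper's---Theorem~\ref{thm:bipartite-proper} for a distinguishing $3$-edge-weighting respecting the bipartition modulo~$3$, and Lemma~\ref{lem:colouring} for the local non-monochromaticity constraint---but you combine them differently. The paper partitions $E(G)$ into the three colour classes $E_1,E_2,E_3$ of the $3$-weighting and applies Lemma~\ref{lem:colouring} separately to each induced subgraph, reweighting $E_i$ with $\{i,i+3\}$; you instead apply the lemma once to the whole component to obtain a global $2$-weighting $\omega_2$ and superpose it via $\omega(e)=\omega_1(e)+3(\omega_2(e)-1)$. Both constructions preserve vertex sums modulo~$3$, so distinguishability transfers identically. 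Your variant is arguably cleaner on the local constraint: in the paper's version a vertex of degree at least~$2$ in $G$ may have degree at most~$1$ in every subgraph induced by $E_i$, in which case the two distinct incident weights come not from the lemma but from the disjointness of the sets $\{i,i+3\}$ across classes---a small case distinction the paper leaves implicit. In your version the two incident edges of distinct $\omega_2$-weight guaranteed by Lemma~\ref{lem:colouring} receive $\omega$-weights in $\{1,2,3\}$ and $\{4,5,6\}$ respectively, settling the constraint in one stroke. The reduction to components and the eligibility of each component for Theorem~\ref{thm:bipartite-proper} are handled correctly.
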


	This result can be restated the following way:
	
	\begin{reptheorem}{thm:bipartite-relaxed}
		Every nice bipartite graph  $G$ verifies $\chi'^{\Delta-1}_{\sum}(G)\le 6$.
	\end{reptheorem}
	
	\begin{proof}
		Let $(V_1,V_2)$ be  the vertex partition of $G$. By Theorem \ref{thm:bipartite-proper}, there is a neighbour sum  distinguishing $3$-edge-weighting $\om$ of $G$ such that $\sigma_{\om}(v_1)\not= \sigma_{\om}(v_2) \pmod 3$ for every  $v_1\in V_1$ and $v_2\in V_2$. Let $E_i=\{e\in E(G): \om(e)=i\}$ for $i\in\{1,2,3\}$. By Lemma \ref{lem:colouring}, every subgraph induced by $E_i$  can be weighted with two weights in such a way that every vertex of degree at least 2 is incident with at least two edges weighted differently. Thus, we reweight  the edges of $E_1$ with weights 1 and 4 in such a way that every vertex of degree at least 2 is incident with at least two edges weighted differently, and similarly we reweight the edges of $E_2$ with  2 and 5, and the edges of $E_3$ with 3 and 6. Let us denote by $\om'$ the resulting edge-weighting. Observe that $\sigma_{\om}(v)=\sigma_{\om'}(v) \pmod 3$. Thus, $\om'$ is  neighbour sum  distinguishing; so $\om'$ is the desired edge-weighting. 
	\end{proof}
	
	The following theorem was proved in \cite{LuQi11}:
	
	\begin{theorem}{\rm \cite{LuQi11}}
		\label{thm:bipartite_two_colours}
		Let $G$ be a connected bipartite graph on at least three vertices with vertex partition $(V_1,V_2)$. If $|V_1|$ is even, then,  there is  a neighbour sum  distinguishing $2$-edge-weighting $\om$ of $G$ such that $\sigma_{\om}(v_1)\not= \sigma_{\om}(v_2) \pmod 2$ for every  $v_1\in V_1$ and $v_2\in V_2$.
	\end{theorem}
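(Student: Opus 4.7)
The plan is to translate the problem into a subgraph-selection problem modulo 2. Given a candidate 2-edge-weighting $\om:E(G)\to\{1,2\}$, let $H=\{e\in E(G):\om(e)=1\}$. Since every edge contributes either $1$ or $0$ to a vertex sum modulo $2$, we have $\sigma_{\om}(v)\equiv d_H(v)\pmod 2$ for every vertex $v$. Hence the condition $\sigma_{\om}(v_1)\not\equiv\sigma_{\om}(v_2)\pmod 2$ for every $v_1\in V_1,v_2\in V_2$ is equivalent to: all vertices of $V_1$ have degrees of one fixed parity in $H$ and all vertices of $V_2$ have degrees of the opposite parity in $H$. Observe also that, if such a parity condition holds, then $\om$ is automatically neighbour sum distinguishing: every edge of $G$ joins a vertex of $V_1$ to a vertex of $V_2$, so the sums on adjacent vertices have different parities and are therefore distinct.

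It remains to produce an $H\subseteq E(G)$ such that $d_H(v)$ is odd for every $v\in V_1$ and even for every $v\in V_2$, i.e.\ a $V_1$-join. I would do this by pairing: write $V_1=\{x_1,x_2,\ldots,x_{2k}\}$ (using the hypothesis that $|V_1|$ is even), and for each pair $(x_{2i-1},x_{2i})$ choose a path $P_i$ in $G$ joining them, which exists because $G$ is connected. Set $H=E(P_1)\triangle E(P_2)\triangle\cdots\triangle E(P_k)$ (symmetric difference). For any vertex $v$, one has
\[
d_H(v)\equiv\sum_{i=1}^{k}d_{P_i}(v)\pmod 2,
\]
and $d_{P_i}(v)$ is $1$ if $v$ is an endpoint of $P_i$, $2$ if $v$ is internal, and $0$ otherwise. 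Thus $d_H(v)\pmod 2$ equals the number of paths $P_i$ having $v$ as an endpoint, which is exactly $1$ for every $v\in V_1$ (each $V_1$-vertex is the endpoint of exactly one path) and exactly $0$ for every $v\in V_2$ (no endpoint lies in $V_2$). Hence $H$ is the desired $V_1$-join.

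Finally, I would define $\om(e)=1$ for $e\in H$ and $\om(e)=2$ for $e\notin H$. By the computation above, $\sigma_{\om}(v_1)\equiv 1\pmod 2$ for every $v_1\in V_1$ and $\sigma_{\om}(v_2)\equiv 0\pmod 2$ for every $v_2\in V_2$, which gives the stated parity separation between the two parts, and in particular distinguishes adjacent vertices. There is no genuinely hard step here: the only nontrivial ingredient is the standard fact that a connected graph admits a $T$-join whenever $|T|$ is even, which the pairing-and-symmetric-difference construction makes transparent; the parity bookkeeping on paths is the only place where one must be careful, and the bipartiteness of $G$ is actually not needed for the construction of $H$ itself, only for the observation that the resulting $\om$ is automatically neighbour sum distinguishing.
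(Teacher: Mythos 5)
Your proof is correct. Note first that the paper itself contains no proof of this statement: Theorem~\ref{thm:bipartite_two_colours} is imported verbatim from \cite{LuQi11}, so there is no in-paper argument to compare against. Your derivation --- observe that $\sigma_{\om}(v)\equiv d_H(v)\pmod 2$ where $H$ is the set of edges weighted $1$, reduce the problem to finding a $V_1$-join (a spanning subgraph whose odd-degree vertices are exactly $V_1$), and build that join by pairing the evenly many vertices of $V_1$ and taking the symmetric difference of connecting paths --- is a clean, self-contained proof; the parity bookkeeping ($d_H(v)$ is congruent mod $2$ to the number of paths having $v$ as an endpoint, since internal occurrences contribute $2$) is handled correctly, and the hypothesis that $|V_1|$ is even is used exactly where it must be, to make the pairing possible. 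Your approach is also a natural mod-$2$ analogue of the proof the paper does give for the mod-$3$ counterpart (Theorem~\ref{thm:bipartite-proper}), which likewise pushes corrections along paths to a fixed root vertex; working mod $2$ lets you replace the ``add $1$ and $2$ alternately'' bookkeeping by a single symmetric difference, which is arguably tidier. You are also right to flag that bipartiteness is not needed to construct $H$, only to convert the parity separation of the two sides into the neighbour-sum-distinguishing property, since every edge joins $V_1$ to $V_2$.
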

	
	Thus, we can apply Theorem \ref{thm:bipartite_two_colours} and, similarly  as Theorem \ref{thm:bipartite-relaxed}, we can prove  the following result, which can again be restated:
	
	\begin{theorem}
		\label{thm:bipartite-relaxed_two_colours}
		Let $G$ be a connected bipartite graph on at least three vertices with  vertex partition $(V_1,V_2)$ and $|V_1|$ be even. Then, $G$ admits a neighbour sum  distinguishing $4$-edge-weighting such that every vertex of degree at least 2 is incident with at least two edges of different weights.
	\end{theorem}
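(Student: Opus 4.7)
The plan is to mimic exactly the structure of the proof of Theorem~\ref{thm:bipartite-relaxed}, but starting from the stronger edge-weighting provided by Theorem~\ref{thm:bipartite_two_colours}. Let $(V_1,V_2)$ be the given vertex partition with $|V_1|$ even. By Theorem~\ref{thm:bipartite_two_colours}, there is a neighbour sum distinguishing $2$-edge-weighting $\om$ of $G$ such that $\sigma_{\om}(v_1)\not\equiv \sigma_{\om}(v_2)\pmod 2$ for every $v_1\in V_1$ and $v_2\in V_2$. The weighting $\om$ already distinguishes adjacent vertices, but it does not yet guarantee the local constraint that each vertex of degree at least~$2$ is incident with edges of two different weights.

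Next, I would partition the edges by weight, setting $E_i=\{e\in E(G):\om(e)=i\}$ for $i\in\{1,2\}$, and view each $E_i$ as the edge set of a (bipartite) subgraph of $G$. Applying Lemma~\ref{lem:colouring} to each of these subgraphs, I would reweight the edges of $E_1$ using the pair of weights $\{1,3\}$ and the edges of $E_2$ using the pair $\{2,4\}$, in both cases ensuring that every vertex of degree at least~$2$ within the corresponding subgraph is incident with two edges of different weights. Call the resulting $4$-edge-weighting $\om'$; by construction $\om'(e)\equiv \om(e)\pmod 2$ for every $e\in E(G)$, so $\sigma_{\om'}(v)\equiv \sigma_{\om}(v)\pmod 2$ for every $v\in V(G)$.

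It remains to check the two properties. Since the parity classes of the original sums separate $V_1$ from $V_2$ and this parity is preserved, for any edge $uv$ with $u\in V_1$, $v\in V_2$ one still has $\sigma_{\om'}(u)\not\equiv \sigma_{\om'}(v)\pmod 2$, and in particular $\sigma_{\om'}(u)\neq \sigma_{\om'}(v)$; hence $\om'$ is neighbour sum distinguishing. For the local constraint, take any $v\in V(G)$ with $d_G(v)\ge 2$. If $v$ is incident with edges of both weight classes (one in $E_1$ and one in $E_2$), it automatically receives two edges with weights of different parity under $\om'$, hence different weights. Otherwise, all edges at $v$ lie in a single $E_i$, so $v$ has degree at least~$2$ in that subgraph and Lemma~\ref{lem:colouring} guarantees two incident edges of different weights inside that class.

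There is essentially no obstacle, since the bookkeeping is identical to the proof of Theorem~\ref{thm:bipartite-relaxed}; the only modest point to verify is that applying Lemma~\ref{lem:colouring} to the subgraphs $(V(G),E_i)$ is legitimate, which it is because each such subgraph inherits bipartiteness from $G$ and Lemma~\ref{lem:colouring} has no connectivity hypothesis. Thus, $\om'$ is the desired neighbour sum distinguishing $4$-edge-weighting in which every vertex of degree at least~$2$ is incident with at least two edges of different weights.
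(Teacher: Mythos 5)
Your proposal is correct and is exactly the argument the paper intends: it explicitly states that Theorem \ref{thm:bipartite-relaxed_two_colours} is proved ``similarly as Theorem \ref{thm:bipartite-relaxed}'' by starting from the parity-separating $2$-edge-weighting of Theorem \ref{thm:bipartite_two_colours} and expanding each weight class via Lemma \ref{lem:colouring} into a pair of weights of the same parity. Your choice of pairs $\{1,3\}$ and $\{2,4\}$ and the case analysis for the local constraint match the paper's method precisely.
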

	
	\begin{reptheorem}{thm:bipartite-relaxed_two_colours}
		Every nice bipartite graph $G$ with an even part verifies $\chi'^{\Delta-1}_{\sum}(G)\le 4$.
	\end{reptheorem}

\section{Conclusion}

In this paper, we studied neighbour sum distinguishing edge-weightings with constraints on the edges incident with vertices of large enough degree: there must be at least two edges with different colours. Those results can be rewritten in the neighbour sum distinguishing $d$-relaxed $k$-edge-colouring framework, which generalizes both the 1-2-3 Conjecture and its proper variant. A general conjecture for this framework is that every nice graph $G$ verifies $\chi'^{d}_{\sum}(G)\le \left\lceil \frac{\Delta(G)}{d} \right\rceil +2$. As such, a specific case is that every nice graph $G$ is conjectured to verify $\chi'^{\Delta-1}_{\sum}(G)\le 4$.

We extended the work previously done on subcubic graphs in~\cite{DaDuPeSi}, and proved that every nice graph $G$ verifies $\chi'^{\Delta-1}_{\sum}(G)\le 7$, and that the bound can be improved to~6 for nice bipartite graphs. We also showed that the bound of~4 of the conjecture holds for nice bipartite graphs with an even part.

Similarly to the 1-2-3 Conjecture and its proper variant, the $d$-relaxed general conjecture seems quite difficult to tackle head-on, hence why we considered it from the angle of local constraints allowing us to study the case $d=\Delta-1$. Hence, we propose some open problems that go further in this direction.

First, is it possible to improve the bound from~6 to~4 for nice bipartite graphs with both odd parts? Our bounds come from results on the 1-2-3 Conjecture and some recolouring to verify the condition, so another method might be necessary.

Then, how to improve the bound for nice graphs in general, in order to go closer to the conjectured bound of~4? Is some refinement of the methods we used (which are classical for this family of problems) enough, or would we need new ideas?

Finally, what of other values for $d$? In particular, is it possible to obtain general results for $d=\Delta-t$ with  $2\le t\le \Delta -2$? In practice, the neighbour sum  distinguishing $k$-edge-weighting such that every vertex of large enough degree is incident with at least $t+1$ edges of different weights implies that $\chi'^{d}_{\sum}(G)\le k$.

\section*{Acknowledgments}

We would like to thank \'Eric Duch\^ene and Aline Parreau for the fruitful discussion during the writing of this paper.

We would also like to thank the anonymous referee for their useful suggestions.


\end{document}